\newtheorem{theorem}{Theorem}
\newtheorem{proposition}{Proposition}
\newtheorem{lemma}{Lemma}
\newtheorem{example}{Example}
\newtheorem{remark}{Remark}
\title{Polarization for arbitrary discrete memoryless channels}
\author{Eren \c Sa\c so\u glu, Emre Telatar, Erdal Ar{\i}kan}
\newcommand{\defn}{\mathrel{\stackrel{\Delta}{=}}}
\def\bs{\mathbf{s}}
\def\cX{\mathcal{X}}
\def\cY{\mathcal{Y}}
\def\cU{\mathcal{U}}
\def\cF{\mathcal{F}}
\def\cP{\mathcal{P}}
\def\Zm{Z_{\text{max}}}
\def\minus{\mathord{-}}
\def\plus{\mathord{+}}
\begin{document}
\maketitle

\begin{abstract}
Channel polarization, originally proposed for binary-input channels, is generalized to arbitrary discrete memoryless channels. Specifically, it is shown that when the input alphabet size is a prime number, a similar construction to that for the binary case leads to polarization. This method can be extended to channels of composite input alphabet sizes by decomposing such channels into a set of channels with prime input alphabet sizes. It is also shown that all discrete memoryless channels can be polarized by randomized constructions. The introduction of randomness does not change the order of complexity of polar code construction, encoding, and decoding. A previous result on the error probability behavior of polar codes is also extended to the case of arbitrary discrete memoryless channels. The generalization of polarization to channels with arbitrary finite input alphabet sizes leads to polar-coding methods for approaching the true (as opposed to symmetric) channel capacity of arbitrary channels with discrete or continuous input alphabets.
\end{abstract}
\begin{keywords} Capacity-achieving codes, channel polarization, polar codes.
\end{keywords}

\section{Polarization}
\label{sec:review}

Channel polarization was introduced in~\cite{Arikan2009} for binary
input discrete memoryless channels as a coding technique to construct
codes --- called polar codes --- for data transmission.  Polar codes are capable of achieving the `symmetric
capacity' of any binary input channel, using low-complexity encoding and decoding algorithms. In terms of the block-length $N$, polar codes can be encoded and decoded in complexity $O(N\log N)$ and achieve a block error probability that decays roughly like $2^{-\sqrt{N}}$. The latter result was shown in~\cite{ArikanTelatar2009}.

The aim of this note is to extend these results
of~\cite{Arikan2009,ArikanTelatar2009} to DMCs with $q$-ary inputs
for any finite integer $q\geq2$.  To that end, we recall the polarization
construction and outline how the results above were shown.

Given a binary input channel $W:\cX\to\cY$ with $\cX=\{0,1\}$ define its
\emph{symmetric capacity} as
\begin{equation}
I(W)=\sum_{x\in\cX}\sum_{y\in\cY}
	\tfrac12 W(y|x)
	\log_2\frac{W(y|x)}{\sum_{x'\in\cX}\frac12 W(y|x')}.
\end{equation}
$I(W)$ is nothing but the mutual information developed between the input
and the output of the channel when the input is uniformly distributed.
In~\cite{Arikan2009}, two independent copies of $W$ are
first combined and then split so as to obtain two \emph{unequal} binary
input channels $W^-$ and $W^+$.  The channel $W^2:\cX^2\to\cY^2$
describes two uses of the channel $W$,
$$
W^2(y_1,y_2|x_1,x_2)=W(y_1|x_1)W(y_2|x_2).
$$
The input $(x_1,x_2)$ to the channel $W^2$ are put in one-to-one
correspondence with $(u_1,u_2)\in\cX^2$ via $x_1=(u_1+u_2) \bmod 2$,
$x_2=u_2$, thus obtaining the \emph{combined} channel $W_2:\cX^2\to\cY^2$ described by
$$
W_2(y_1,y_1|u_1,u_2)=W^2(y_1,y_2|u_1+u_2,u_2)=W(y_1|u_1+u_2)W(y_2|u_2).
$$
The \emph{split} is inspired by the chain rule of mutual information: Let $U_1,U_2,X_1,X_2,Y_1,Y_2$ be random variables corresponding to their lowercase versions above. If
$U_1,U_2$ are independent and uniformly distributed, then so
are $X_1,X_2$ and consequently, on the one hand,
\begin{align*}
I(U_1,U_2;Y_1,Y_2)&=I(X_1,X_2;Y_1,Y_2)=I(X_1;Y_1)+I(X_2;Y_2)=2I(W),
\intertext{and on the other}
I(U_1,U_2;Y_1,Y_2)&=I(U_1;Y_1,Y_2)+I(U_2;Y_1,Y_2,U_1).
\end{align*}
The split channels $W^-$ and $W^+$ describe those that occur on the right hand
side of the equation above:
\begin{align*}
W^-(y_1,y_2|u_1)&=\sum_{u_2\in\cX}\tfrac12 W(y_1|u_1+u_2)W(y_2|u_2),\\
W^+(y_1,y_2,u_1|u_2)&=\tfrac12 W(y_1|u_1+u_2)W(y_2|u_2),
\end{align*}
so that $I(U_1;Y_1,Y_2)=I(W^-)$ and $I(U_2;Y_1,Y_2,U_1)=I(W^+)$.

The polarization construction given in~\cite{Arikan2009} is obtained by
a repeated application of $W\mapsto (W^-,W^+)$.  Since both $W^-$ and $W^+$
are binary input channels, one can obtain $W^{--}:=(W^-)^-$,
$W^{-+}:=(W^-)^+$, $W^{+-}:=(W^+)^-$, and $W^{++}:=(W^{+})^+$.  After
$n$ levels of application, one obtains $2^n$ channels
$W^{-\,\dotsm\,-},\dots,W^{+\,\dotsm\,+}$. 
The main observation
in~\cite{Arikan2009} is that these channels polarize in the following
sense:
\begin{proposition}[\cite{Arikan2009}]
\label{prop:bipolar}
For any $\delta>0$,
\begin{equation}
\label{eq:polar}
\lim_{n\to\infty}
	\frac{\#\bigl\{\bs\in\{+,-\}^n\colon
		I(W^\bs)\in(\delta,1-\delta)\bigr\}
	}{2^n}=0.
\end{equation}
\end{proposition}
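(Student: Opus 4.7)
The plan is to rephrase the counting statement probabilistically and invoke the martingale convergence theorem. Let $B_1, B_2, \dotsc$ be independent random variables, each uniform on $\{+,-\}$, and define random channels $W_0 = W$ and $W_n = W_{n-1}^{B_n}$. Then
\[
\frac{\#\bigl\{\bs\in\{+,-\}^n\colon I(W^\bs)\in(\delta,1-\delta)\bigr\}}{2^n}
= \Pr\bigl[I(W_n)\in(\delta,1-\delta)\bigr],
\]
so it suffices to show that $I_n \defn I(W_n)$ converges almost surely to a limit $I_\infty$ that takes values in $\{0,1\}$.

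The first key ingredient is the conservation identity $I(W^-) + I(W^+) = 2I(W)$, which is immediate from the chain-rule derivation in the excerpt: the right-hand side is $I(U_1,U_2;Y_1,Y_2)$, which also equals $2I(W)$. Consequently $E[I_n\mid I_0,\dotsc,I_{n-1}] = I_{n-1}$, so $\{I_n\}$ is a martingale bounded in $[0,1]$; the martingale convergence theorem yields $I_n \to I_\infty$ almost surely and in $L^1$.

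The real content of the proposition is to force $I_\infty \in \{0,1\}$ almost surely. The cleanest route is to track the Bhattacharyya parameter $Z(W) = \sum_{y} \sqrt{W(y|0)\,W(y|1)}$ in parallel. A direct computation gives $Z(W^+) = Z(W)^2$ and $Z(W^-) \leq 2Z(W) - Z(W)^2$; adding these, one sees that $Z_n \defn Z(W_n)$ is a bounded supermartingale and so converges almost surely to some $Z_\infty \in [0,1]$. But $Z_n = Z_{n-1}^2$ whenever $B_n = +$, and by Borel--Cantelli this event occurs infinitely often almost surely; passing to the limit along such a subsequence forces $Z_\infty = Z_\infty^2$, hence $Z_\infty \in \{0,1\}$. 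Standard comparison inequalities between $I$ and $Z$ (ensuring $I \to 1$ as $Z \to 0$ and $I \to 0$ as $Z \to 1$) then transfer the dichotomy to $I_\infty \in \{0,1\}$.

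The main obstacle is this last step: the martingale property of $I_n$ alone gives convergence but not the extremality of the limit. One needs an auxiliary quantity that contracts strictly along at least one branch unless the channel is already polarized, and the Bhattacharyya parameter --- because of its quadratic behavior $Z(W^+) = Z(W)^2$ under the $+$ transform --- is precisely what does the job. Verifying the two transformation rules for $Z$ is routine algebra, but choosing the right auxiliary process is the crux of the argument.
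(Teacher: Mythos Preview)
Your argument is correct, but it follows a different route from the paper's. The key divergence is in how you force the auxiliary process to land in $\{0,1\}$: you use the supermartingale property of $Z_n$ (from $Z(W^+)+Z(W^-)\le 2Z(W)$) to obtain almost-sure convergence of $Z_n$ first, and then pass to the limit along the infinitely many indices with $B_n=\plus$ to deduce $Z_\infty=Z_\infty^2$. This is essentially the original argument in~\cite{Arikan2009}. The paper instead proves a general lemma (Lemma~\ref{lem:conv}) that deliberately avoids any convergence statement for $T_n$: it argues by contradiction that $T_n$ cannot remain in $(\delta,1-\delta)$ for all large $n$, because Borel--Cantelli produces arbitrarily long runs of consecutive $\plus$'s, and $k$ successive squarings push any value in $[0,1-\delta]$ below $\delta$. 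The $\{0,1\}$ dichotomy is then transferred to $I_\infty$ via the two-sided coupling~(i\&t.1) between $I$ and $T$.

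What each approach buys: your supermartingale route is slick in the binary case, but the paper's point in giving a \emph{new} proof is precisely that the supermartingale step does not survive the $q$-ary generalization. There the tracked quantity is $\Zm(W)$, and one only has $\Zm(W^-)\le q\,\Zm(W)$, which does not yield a supermartingale. The paper's contradiction-via-repeated-squaring argument needs only the exact relation $T_{n+1}=T_n^2$ on the $\plus$ branch together with the coupling~(i\&t.1), both of which do extend to arbitrary prime $q$.
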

In other words, except for a vanishing fraction, all the channels
obtained at level $n$ are either almost perfect, $I(W^\bs)\geq1-\delta$,
or almost pure noise, $I(W^\bs)\leq\delta$.

As the equality $I(W^-)+I(W^+)=2I(W)$ leads by induction to
$\sum_{\bs\in\{+,-\}^n} I(W^\bs)=2^n I(W)$, one then concludes that
the fraction of almost perfect channels approaches the symmetric capacity.
This last observation is 
the basis of 
what lets~\cite{Arikan2009} conclude that polar
codes achieve the symmetric capacity.

We give here a new proof of this proposition because it will
readily generalize to the $q$-ary input case we will discuss later.
Before we embark on this proof, we introduce the Bhattacharyya
parameter for a binary input channel $W:\cX\to\cY$, defined by
\begin{equation}
\label{eq:z-binary}
Z(W)=\sum_y\sqrt{W(y|0)W(y|1)}.
\end{equation}
The relationship between $Z(W)$, $Z(W^-)$, $Z(W^+)$ and $I(W)$ is already
discussed in \cite{Arikan2009}, where the following is shown:
\begin{lemma}[\cite{Arikan2009}]
\label{lem:iz}
\mbox{}\par
\begin{enumerate}
\item[(i)]
	$Z(W^+)=Z(W)^2$,
\item[(ii)]
	$Z(W^-)\leq 2Z(W)-Z(W)^2$, 
\item[(iii)]
	$I(W)+Z(W)\geq 1$,
\item[(iv)]
	$I(W)^2+Z(W)^2\leq 1$.
\end{enumerate}
\end{lemma}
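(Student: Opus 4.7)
My plan is to prove the four claims in sequence, each by reducing to a one-variable inequality after a suitable algebraic rearrangement.

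For (i), I would plug the definition of $W^+$ into $Z(W^+) = \sum_{y_1,y_2,u_1}\sqrt{W^+(y_1,y_2,u_1|0)\,W^+(y_1,y_2,u_1|1)}$. The square root factors into pieces depending separately on $(y_1,u_1)$ and on $y_2$. The sum over the binary $u_1\in\{0,1\}$ simply doubles the $y_1$-factor because both values of $u_1$ give the same summand, and this factor of $2$ cancels the $\tfrac12$ from the split channel. The two remaining $y$-sums are each $Z(W)$, giving $Z(W^+)=Z(W)^2$.

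For (ii), the decisive step is algebraic. Writing $a=W(y_1|0),\ b=W(y_1|1),\ c=W(y_2|0),\ d=W(y_2|1)$, one has $Z(W^-) = \tfrac12\sum_{y_1,y_2}\sqrt{(ac+bd)(bc+ad)}$. I would first verify the identity
\[
(ac+bd)(bc+ad) = (a+b)^2\,cd + (c+d)^2\,ab - 4abcd,
\]
which puts the integrand in the form $\sqrt{\alpha^2+\beta^2-t^2}$ with $\alpha=(a+b)\sqrt{cd}$, $\beta=(c+d)\sqrt{ab}$, $t=2\sqrt{abcd}$. Since AM--GM forces $t\le\min(\alpha,\beta)$, the inequality $(\alpha-t)(\beta-t)\ge 0$ rearranges to $\sqrt{\alpha^2+\beta^2-t^2}\le \alpha+\beta-t$. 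Summing this pointwise bound over $(y_1,y_2)$, using $\sum_y W(y|x)=1$ and the definition of $Z(W)$ to evaluate each sum, gives $Z(W^-)\le 2Z(W)-Z(W)^2$.

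For (iii) and (iv), I would reparametrize in terms of the induced output distribution: put $P_y = \tfrac12(W(y|0)+W(y|1))$ and $p_y = W(y|0)/(2P_y)$, so that
\[
I(W) = 1 - \sum_y P_y\,h(p_y),
\qquad
Z(W) = \sum_y P_y\cdot 2\sqrt{p_y(1-p_y)},
\]
with $h$ the binary entropy function. Claim (iii) then reduces to the termwise bound $h(p)\le 2\sqrt{p(1-p)}$, a standard one-variable inequality on $[0,1]$, which gives $H(X|Y)\le Z(W)$ and hence $I(W)+Z(W)\ge 1$. For (iv), I would view $(I(W),Z(W))$ as $\mathbb{E}_{Y\sim P_Y}[V_y]$ where $V_y=(1-h(p_y),\,2\sqrt{p_y(1-p_y)})$; the triangle inequality followed by Jensen gives $I(W)^2+Z(W)^2 = |\mathbb{E}[V]|^2 \le \mathbb{E}[|V|^2]$. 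The pointwise bound $(1-h(p))^2 + 4p(1-p) \le 1$ reduces, since $1-4p(1-p)=(1-2p)^2$, to $h(p)\ge 1-|1-2p| = 2\min(p,1-p)$, which holds because the concave function $h$ lies above each of its chords on $[0,1/2]$ and on $[1/2,1]$.

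The main obstacle is the algebraic step in (ii): all the cancellations must work out so that the upper bound sums to exactly $2Z-Z^2$, not some weaker constant, and this hinges on the identity above together with the ``subtract-$t$, then add it back'' trick $\sqrt{\alpha^2+\beta^2-t^2}\le \alpha+\beta-t$. Parts (i), (iii), and (iv) become essentially routine once one passes to the $(P_y, p_y)$ parametrization.
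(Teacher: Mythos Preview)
Your argument is correct in all four parts. The identity in (ii) checks out, the bound $\sqrt{\alpha^2+\beta^2-t^2}\le\alpha+\beta-t$ for $0\le t\le\min(\alpha,\beta)$ is valid, and the sums collapse exactly to $2Z-Z^2$ as you claim; parts (iii)--(iv) are likewise fine once one accepts the scalar inequalities $h(p)\le 2\sqrt{p(1-p)}$ and $h(p)\ge 2\min(p,1-p)$.

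As for comparison with the paper: this paper does not prove Lemma~\ref{lem:iz} at all---it simply quotes it from \cite{Arikan2009}. What the paper does prove, in the appendix, are the $q$-ary analogues in Proposition~\ref{rateReliability}, and there the routes diverge somewhat from yours. For the analogue of (iii) the paper invokes Gallager's cutoff rate $E_0(1,Q)$ rather than the pointwise bound $h(p)\le 2\sqrt{p(1-p)}$; for the analogue of (iv) it reduces to the binary case via a genie argument and then cites \cite{Arikan2009} for the binary inequality you prove directly. Your entropy/Jensen approach for (iii)--(iv) is more self-contained and arguably more elementary than going through $E_0$; the paper's route has the advantage of plugging directly into an existing textbook fact. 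For (i)--(ii) there is nothing in this paper to compare against.
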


Proposition~\ref{prop:bipolar} was proved in \cite{Arikan2009} for the binary case ($q=2$) using Lemma~\ref{lem:iz}.
Unfortunately, Lemma~\ref{lem:iz} does not generalize to the non-binary case ($q\ge 3$). The following alternate proof of Proposition~\ref{prop:bipolar} uses less stringent conditions that can be fulfilled for all $q\ge 2$.
\begin{lemma}
\label{lem:conv}
Suppose $B_i$, $i=1,2,\dots$ are i.i.d., $\{+,-\}$-valued random variables
with
$$
P(B_1= \minus)=P(B_1 = \plus)=\tfrac12
$$
defined on a probability space $(\Omega,\cF,P)$.  Set $\cF_0=\{\phi,\Omega\}$
as the trivial $\sigma$-algebra and set $\cF_n$, $n\geq1$ to be the
$\sigma$-field generated by $(B_1,\dots,B_n)$.

Suppose further that two stochastic processes $\{I_n:n\geq0\}$ and
$\{T_n:n\geq0\}$ are defined on this probability space with the following
properties:
\begin{itemize}[\labelwidth=2.5em]
\item[(i.1)]
$I_n$ takes values in the interval $[0,1]$ and is measurable with respect to
$\cF_n$.  That is, $I_0$ is a constant, and $I_n$ is a function of
$B_1,\dots,B_n$.
\item[(i.2)]
$\{(I_n,\cF_n):n\geq0\}$ is a martingale.
\item[(t.1)]
$T_n$ takes values in the interval $[0,1]$ and is measurable with respect to
$\cF_n$.
\item[(t.2)]
$T_{n+1}=T_n^2$ when $B_{n+1}=\plus$.
\item[(i\&t.1)]
For any $\epsilon>0$ there exists $\delta>0$ such that
$I_n\in(\epsilon,1-\epsilon)$ implies $T_n\in(\delta,1-\delta)$.
\end{itemize}

Then, $I_\infty:=\lim_{n\to\infty}I_n$ exists with probability 1, $I_\infty$
takes values in $\{0,1\}$, and $P(I_\infty=1)=I_0$.
\end{lemma}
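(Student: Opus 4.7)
The plan is to combine martingale convergence for $I_n$ with the squaring dynamics of $T_n$ on $\plus$-steps, linked through (i\&t.1). Conditions (i.1) and (i.2) make $(I_n,\cF_n)$ a bounded martingale in $[0,1]$, so I would first invoke Doob's martingale convergence theorem to obtain an almost-sure limit $I_\infty := \lim_n I_n \in [0,1]$; bounded convergence then gives $E[I_\infty] = E[I_0] = I_0$. Once $I_\infty \in \{0,1\}$ is known to hold almost surely, the identity $P(I_\infty = 1) = E[I_\infty] = I_0$ follows immediately, so the real work is to prove polarization of the limit.

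For that polarization step I would fix $\epsilon > 0$ and consider $A_\epsilon := \{I_\infty \in [\epsilon, 1-\epsilon]\}$. On $A_\epsilon$, a.s.\ convergence supplies a (random) $N$ with $I_n \in (\epsilon/2, 1-\epsilon/2)$ for every $n \geq N$, and (i\&t.1) then produces a deterministic $\delta = \delta(\epsilon) > 0$ with $T_n \in (\delta, 1-\delta)$ for every $n \geq N$. Next I would pick an integer $k$ so large that $(1-\delta)^{2^k} < \delta$. Since the $B_i$ are i.i.d.\ fair coin flips, the disjoint events $E_j := \{B_{jk+1} = \cdots = B_{(j+1)k} = \plus\}$, $j = 1, 2, \ldots$, are independent with probability $2^{-k}$ each, so the second Borel--Cantelli lemma guarantees that infinitely many of them occur almost surely. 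On $A_\epsilon$ some such $E_j$ then occurs with $jk \geq N$, and iterating (t.2) $k$ times gives $T_{(j+1)k} = T_{jk}^{2^k} \leq (1-\delta)^{2^k} < \delta$, contradicting $T_{(j+1)k} > \delta$. Hence $P(A_\epsilon) = 0$, and letting $\epsilon$ run through a sequence tending to $0$ yields $I_\infty \in \{0,1\}$ almost surely.

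The main obstacle is that (t.2) only constrains $T_{n+1}$ when $B_{n+1} = \plus$, so the $T_n$ process by itself need not converge and cannot be handled by martingale techniques alone. The proof circumvents this by using the two-sided bound on $T_n$ only at the two endpoints of a single sufficiently long all-$\plus$ run, a run that the second Borel--Cantelli lemma is guaranteed to supply beyond any finite random $N$.
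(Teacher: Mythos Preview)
Your proposal is correct and follows essentially the same route as the paper: martingale convergence for $I_n$, reduction to showing $P(I_\infty\in[\epsilon,1-\epsilon])=0$, the trap $T_n\in(\delta,1-\delta)$ via (i\&t.1), the choice of $k$ with $(1-\delta)^{2^k}<\delta$, and the second Borel--Cantelli lemma applied to independent blocks of $k$ consecutive $\plus$'s to force a contradiction. The only differences are cosmetic---the paper names the intermediate events $J_\epsilon$ and $K_\delta$ and indexes the $\plus$-runs slightly differently---so there is nothing substantive to compare.
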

\begin{proof}
The almost sure convergence of $I_n$ to a limit follows from $\{I_n\}$ being
a bounded martingale.  Once it is known that $I_\infty$ is $\{0,1\}$-valued
it will then follow from the martingale property that
$P(I_\infty=1)=E[I_\infty]=I_0$.  It thus remains to prove that $I_\infty$
is $\{0,1\}$-valued.  This, in turn, is equivalent to showing that
for any $\eta>0$,
$$
P\bigl(I_\infty\in(\eta,1-\eta)\bigr)=0.
$$
Since for any $0<\epsilon<\eta$, the event
$\bigl\{I_\infty\in(\eta,1-\eta)\bigr\}$ is included in the event
$$
J_\epsilon:=\bigl\{\omega\colon
	\text{there exists $m$ such that for all $n\geq m$,
		$I_n\in(\epsilon,1-\epsilon)$}\bigr\},
$$
and since by property~(i\&t.1)
there exists $\delta>0$ such that $J_\epsilon\subset K_\delta$ where
$$
K_\delta:=\bigl\{\omega\colon
	\text{there exists $m$ such that for all $n\geq m$,
		$T_n\in(\delta,1-\delta)$}\bigr\},
$$
it suffices to prove that $P(K_\delta)=0$ for any $\delta>0$.
This is trivially true for $\delta\geq1/2$. Therefore, it suffices to show the claim for $0<\delta<1/2$. Given such a $\delta$, find a positive integer $k$
for which $(1-\delta)^{2^k}<\delta$.  This choice of $k$ guarantees that
if a number $x\in[0,1-\delta]$ is squared $k$ times in a row, the result
lies in $[0,\delta)$.

For $n\geq1$ define $E_n$ as the event that $B_n=B_{n+1}=\dots=B_{n+k-1}=\plus$,
i.e., $E_n$ is the event that there are $k$ consecutive $\plus$'s in the sequence
$\{B_i:i\geq1\}$ starting at index $n$.  Note that $P(E_n)=2^{-k}>0$, and
that $\{E_{mk}:m\geq1\}$ is a collection of independent events.  The
Borel--Cantelli lemma thus lets us conclude that the event
\begin{align*}
E&=\{\text{$E_n$ occurs infinitely often}\}\\
&=\{\omega\colon
	\text{for every $m$ there exists $n\geq m$ such that
	$\omega\in E_n$}\}
\end{align*}
has probability 1, and thus $P(K_\delta)=P(K_\delta\cap E)$.  We will now
show that $K_\delta\cap E$ is empty, from which it will follow that
$P(K_\delta)=0$.  To that end, suppose $\omega\in K_\delta\cap E$.  Since
$\omega\in K_\delta$, there exists $m$ such that $T_n(\omega)\in
(\delta,1-\delta)$ whenever $n\geq m$.  But since $\omega\in E$ there
exists $n_0\geq m$ such that $B_{n_0+1}=\dots=B_{n_0+k-1}=\plus$, and thus
$T_{n_0+k}(\omega)=T_{n_0}(\omega)^{2^k}\leq(1-\delta)^{2^k}<\delta$
which contradicts with $T_{n_0+k}(\omega)\in(\delta,1-\delta)$.
\end{proof}

\begin{remark}
The proof of Lemma~\ref{lem:conv} uses property~(t.2) only
in the way that repeated squarings of a number in
$(\delta,1-\delta)$ will eventually fall outside $(\delta,1-\delta)$.
Thus, condition (t.2) may be replaced by any other that has this
property.  E.g., conditioned on $\cF_n$, \emph{at least one} of the
two values of $T_{n+1}$ satisfies
$$
T_{n+1}\leq f(T_n)
$$
for a nondecreasing $f$ having the property that for
any $\delta>0$, there exists $k$ such that $f^{(k)}(1-\delta)\leq\delta$.
Here $f^{(k)}$ denotes $k$-fold composition of $f$.
\end{remark}

\begin{proof}[Proof of Proposition~\ref{prop:bipolar}]
Let $B_1,B_2,\dots$ be i.i.d., $\{+,-\}$-valued random variables taking
the two values with equal probability, as in Lemma~\ref{lem:conv}.  Define
$$
I_n:=I_n(B_1,\dots,B_n)=I(W^{B_1,\dots,B_n})
$$
and
$$
T_n:=T_n(B_1,\dots,B_n)=Z(W^{B_1,\dots,B_n}).
$$
These processes satisfy the conditions of Lemma~\ref{lem:conv}:
(i.1) is trivially true with $I_0=I(W)$; the martingale property (i.2)
follows from $I(W^-)+I(W^+)=2I(W)$; (t.1) is again trivially true;
(t.2) follows from Lemma~\ref{lem:iz}(i); (i\&t.1) follows from
Lemma~\ref{lem:iz}(iii) and~(iv).

Thus, the process $I_n$ converges with probability 1 to a $\{0,1\}$-valued
random variable.  This implies that
$$
\lim_{n\to\infty}P(I_n\in(\delta,1-\delta))=0.
$$
Note that the distribution of $(B_1,\dots,B_n)$ is the uniform
distribution on $\{+,-\}^n$.  Thus,
$$
P(I_n\in(\delta,1-\delta))=\frac{\#\bigl\{\bs\in\{\plus,\minus\}^n\colon
	I(W^\bs)\in(\delta,1-\delta)\bigr\}}{2^n},
$$
and Proposition~\ref{prop:bipolar} follows.
\end{proof}

The following lemma was proved in~\cite{ArikanTelatar2009}.
\begin{lemma}[\cite{ArikanTelatar2009}]
\label{lem:rate}
Suppose that the processes $\{B_n\}$, $\{I_n\}$ and $\{T_n\}$, in addition
to the conditions (i.1), (i.2), (t.1), (t.2) and (i\&t.1) in
Lemma~\ref{lem:conv}, also satisfy
\begin{itemize}[\labelwidth=2.5em]
\item[(t.3)] 
For some constant $\kappa$, $T_{n+1}\leq \kappa T_n$ when $B_{n+1}=\minus$.
\item[(i\&t.2)]
For any $\epsilon>0$ there exists $\delta>0$ such that
$I_n>1-\delta$ implies $T_n<\epsilon$.
\end{itemize}
Then, for any $0<\beta<1/2$
\begin{align}
\label{eq:rate}
\lim_{n\to\infty} P(T_n\leq 2^{-2^{\beta n}}) = I_0.
\end{align}
\end{lemma}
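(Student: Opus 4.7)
The plan is to apply Lemma~\ref{lem:conv} to the martingale $\{I_n\}$, obtaining $I_n\to I_\infty\in\{0,1\}$ almost surely with $P(I_\infty=1)=I_0$. Condition~(i\&t.2) then immediately delivers the qualitative convergence $T_n\to 0$ a.s.\ on $\{I_\infty=1\}$: on this event, for every $\epsilon>0$ eventually $I_n>1-\delta(\epsilon)$ and hence $T_n<\epsilon$. The event $\{T_n\to 0\}$ thus already has probability at least $I_0$, and the remaining task is to sharpen this qualitative convergence into the quantitative rate $T_n\le 2^{-2^{\beta n}}$ using (t.2), (t.3) and concentration of the $B_i$'s.

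Fix $\nu>0$ and $\beta'\in(\beta,1/2)$. By continuity of measure applied to the qualitative step, pick a constant $\eta>0$ (small, depending on $\kappa,\nu,\beta'$) and a deterministic time $m_0$ with $P(T_{m_0}\le\eta)\ge I_0-\nu$. Setting $L_k:=-\log_2 T_k\ge 0$, conditions (t.2) and (t.3) become $L_{k+1}=2L_k$ when $B_{k+1}=\plus$ and $L_{k+1}\ge L_k-\log_2\kappa$ when $B_{k+1}=\minus$. Unwinding this recursion from $m_0$ to $n$ yields
$$L_n\ge 2^{S}\Bigl(L_{m_0}-\log_2\kappa\sum_{k\in M}2^{-P(m_0,k]}\Bigr),$$
where $S$ counts the $\plus$'s in $(m_0,n]$, $M$ is the set of $\minus$-positions, and $P(m_0,k]$ counts the $\plus$'s in $(m_0,k]$. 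A short computation gives $E\bigl[\sum_{k\in M}2^{-P(m_0,k]}\bigr]\le 2$ uniformly in $n$; Markov's inequality then yields that this sum is $\le 2/\nu$ on an event of probability $\ge 1-\nu$, while Chernoff separately gives $S\ge\beta'(n-m_0)$ with probability $1-o(1)$. Choosing $\eta$ sufficiently small (depending on $\kappa,\nu$) makes the parenthesised factor exceed $1$, so on the intersection of these high-probability events $L_n\ge 2^{\beta'(n-m_0)}\ge 2^{\beta n}$ for all large $n$, i.e.\ $T_n\le 2^{-2^{\beta n}}$.

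Combining the two steps, for every $\nu>0$ we have $\liminf_n P(T_n\le 2^{-2^{\beta n}})\ge I_0-O(\nu)$, hence $\liminf\ge I_0$; the matching upper bound $\limsup\le I_0$ follows from a dual analysis of the tail $\{T_n\text{ small but }I_\infty=0\}$, which in the polar-coding applications is controlled by an additional property of the form $I+T\ge 1-o(1)$ (the analog of $I(W)+Z(W)\ge 1$ from Lemma~\ref{lem:iz}) that forces $T_n$ to stay bounded away from $0$ on $\{I_\infty=0\}$. The main technical obstacle is the concentration of the loss sum $\sum_{k\in M}2^{-P(m_0,k]}$: a worst-case ordering of the $B_i$'s (all $\minus$'s immediately after $m_0$) would make this grow linearly in $n$ and force $\eta\lesssim\kappa^{-n}$, but such orderings are atypical and the expected value of the sum is bounded by a universal constant, so a Markov-type bound gives $O(1)$ control with high probability and allows $\eta$ to be chosen as a constant.
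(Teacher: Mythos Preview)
The paper does not itself prove Lemma~\ref{lem:rate}; it is quoted from~\cite{ArikanTelatar2009}. Your bootstrapping argument for the lower bound $\liminf_n P(T_n\le 2^{-2^{\beta n}})\ge I_0$ is precisely the approach of~\cite{ArikanTelatar2009}: one first obtains $T_n\to 0$ a.s.\ on $\{I_\infty=1\}$ from Lemma~\ref{lem:conv} and~(i\&t.2), freezes a deterministic time $m_0$ at which $T_{m_0}$ is small with probability near $I_0$, and then uses the recursion on $L_k=-\log_2 T_k$ together with the law of large numbers for the $B_i$'s to amplify. Your unwinding $L_n\ge 2^{S}\bigl(L_{m_0}-\log_2\kappa\sum_{k\in M}2^{-P(m_0,k]}\bigr)$ is correct, and the bound $E\bigl[\sum_{k\in M}2^{-P(m_0,k]}\bigr]\le 2$ holds since the $k$th summand has expectation $\tfrac12(3/4)^{k-m_0-1}$. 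Since $\{T_{m_0}\le\eta\}\in\cF_{m_0}$ is independent of $(B_{m_0+1},\dots,B_n)$, the three events can be intersected by a union bound, and your conclusion follows.

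Where your proposal has a genuine gap is the upper bound $\limsup_n P(T_n\le 2^{-2^{\beta n}})\le I_0$. You are right that this direction does \emph{not} follow from the hypotheses (i.1)--(i\&t.2) alone: take the degenerate processes $I_n\equiv 0$ and $T_n\equiv 0$, which satisfy all six conditions vacuously, yet $P(T_n\le 2^{-2^{\beta n}})=1\ne 0=I_0$. So appealing to ``an additional property of the form $I+T\ge 1-o(1)$'' is not a proof of the lemma as stated --- it is an admission that the stated hypotheses are insufficient for the equality, and you should say so explicitly rather than gesture at the applications. In~\cite{ArikanTelatar2009} the matching bound is obtained for the concrete processes $I_n=I(W^{B_1\cdots B_n})$, $T_n=Z(W^{B_1\cdots B_n})$ using the extra information that $Z(W^-)\ge Z(W)$ (so $T_n$ is a bounded supermartingale converging a.s.) together with a relation forcing $T_\infty=1$ on $\{I_\infty=0\}$; the abstract lemma in this paper is really only used for the $\liminf\ge I_0$ direction, which is what the block-error bound~\eqref{eq:Perror} requires.
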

Note that in the proof of Proposition~\ref{prop:bipolar} the random variable $T_n$ denotes the Bhattacharyya parameter of a randomly chosen channel after $n$ steps of polarization.  Therefore, Lemma~\ref{lem:rate} states that after $n$ steps of polarization, almost all `good' channels will have Bhattacharyya parameters that are smaller than $2^{-2^{n\beta}}$ for any $\beta<1/2$, provided that $n$ is sufficiently large. Since the Bhattacharyya parameter is an upper bound to the error probability of uncoded transmission, this implies that
, at any fixed coding rate below $I_0=I(W)$, 
the block error probability $P_e$ of binary polar codes under successive cancellation decoding will satisfy
\begin{align}
\label{eq:Perror}
P_e\leq 2^{-N^\beta} \text{ for all } \beta<1/2,
\end{align}
when the block-length $N=2^n$ is sufficiently large.

\section{Polarization for $q$-ary input channels}
\label{sec:q-ary}

In this section we will show how the transformation $(u_1,u_2)\mapsto
(x_1,x_2)$ (and consequently $W\mapsto(W^-,W^+)$) and the definition
of $Z(W)$ can be modified so that the hypotheses of Lemmas~\ref{lem:conv}
and~\ref{lem:rate} are satisfied when the channel input alphabet is not
binary.  This will establish that the new transformation satisfies
equation~\eqref{eq:polar}, leading to the conclusion that $q$-ary
polar codes achieve symmetric capacity. That the error
probability behaves roughly like $2^{-\sqrt{N}}$ will also follow.

To that end, let
$q$ denote the cardinality of the channel input alphabet $\cX$ and
define
$$
I(W) \defn \sum_{x\in \cX} \sum_{y\in \cY}
	\frac{1}{q} W(y|x) \log
		\frac{W(y|x)}{\sum_{x'\in \cX}\frac{1}{q} W(y|x')}
$$
as the symmetric capacity of a channel $W$.
We will take the base of the logarithm in this mutual information
equal to the input alphabet size $q$, so that $0\leq I(W)\leq 1$.

For any pair of input letters $x,x'\in \cX$, we define the Bhattacharyya distance between them as
\begin{align}
Z(W_{\{x,x'\}}) & = \sum_{y\in \cY} \sqrt{W(y|x)W(y|x')}. \label{defn:zparameter1}
\end{align}
Here, the notation $W_{\{x,x'\}}$ should be interpreted as denoting the channel obtained by restricting the input alphabet of $W$ to the subset $\{x,x'\}\subset \cX$.
We also define the average Bhattacharyya distance of $W$ as
\begin{align}
Z(W) & = \sum_{x,x'\in \cX,x\neq x'} \frac{1}{q(q-1)}Z(W_{\{x,x'\}}). \label{defn:zparameter}
\end{align}
The average Bhattacharyya distance upper bounds the error probability of uncoded transmission:

\begin{proposition}\label{Prop:MLerrorBound}
Given a $q$-ary input channel $W$, let $P_e$ denote the error probability of the maximum-likelihood decoder for a single channel use. Then,
$$
P_e \leq (q-1) Z(W).
$$
\end{proposition}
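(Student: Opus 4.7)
The plan is to reduce the ML error event to a union of pairwise comparisons and then bound each pairwise error by the corresponding Bhattacharyya distance, exactly as in the binary case, and finally average over the transmitted letter.

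First I would fix a uniform prior on the input $X$ (which is the convention implicit in the definition of $I(W)$ and of the $Z$-parameter used in this paper). Let $\hat X(y)$ be an ML decoder; conditioned on $X=x$, an error occurs only if there exists some $x'\ne x$ with $W(y|x')\ge W(y|x)$. A union bound over $x'\ne x$ then gives
\begin{equation*}
P(\text{error}\mid X=x)\;\le\;\sum_{x'\ne x} P\bigl(W(Y|x')\ge W(Y|x)\,\bigm|\,X=x\bigr).
\end{equation*}
(Ties may be assigned to any of the maximizing letters; the bound $\mathbf{1}\{W(y|x')\ge W(y|x)\}\le \sqrt{W(y|x')/W(y|x)}$ handles both strict inequality and equality uniformly, so the choice of tie-breaking rule is immaterial.)

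Next I would estimate each pairwise term by the standard Bhattacharyya argument: since $\mathbf{1}\{W(y|x')\ge W(y|x)\}\le \sqrt{W(y|x')/W(y|x)}$ whenever $W(y|x)>0$,
\begin{equation*}
\sum_y W(y|x)\,\mathbf{1}\{W(y|x')\ge W(y|x)\}\;\le\;\sum_y \sqrt{W(y|x)\,W(y|x')}\;=\;Z(W_{\{x,x'\}}).
\end{equation*}
Averaging the conditional bound over $x$ and plugging in the definition of $Z(W)$,
\begin{equation*}
P_e\;\le\;\frac{1}{q}\sum_{x\in\cX}\sum_{x'\ne x} Z(W_{\{x,x'\}})\;=\;\frac{q(q-1)}{q}\,Z(W)\;=\;(q-1)\,Z(W),
\end{equation*}
which is the claimed inequality.

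There is no real obstacle here: once one reads the definition of $Z(W)$ as the average pairwise Bhattacharyya distance, the whole proof is a union bound followed by the elementary $\mathbf{1}\{a\ge b\}\le\sqrt{a/b}$ trick, and the combinatorial factor $(q-1)$ comes out automatically from the normalization $1/[q(q-1)]$ in the definition of $Z(W)$. The only point to be slightly careful about is that the union bound is applied to the \emph{pairwise error events} rather than the actual ML decision regions, and that ties in the likelihood do not spoil the Bhattacharyya estimate, which the inequality above addresses.
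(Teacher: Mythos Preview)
Your proposal is correct and matches the paper's own proof essentially line for line: condition on the transmitted letter, union-bound over competing letters $x'\ne x$, replace the indicator $\mathbf{1}\{W(y|x')\ge W(y|x)\}$ by $\sqrt{W(y|x')/W(y|x)}$, and average over $x$ to pull out the factor $(q-1)$ from the normalization in $Z(W)$. Your explicit remark about ties is a nice addition but does not change the argument.
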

\begin{proof}
Let $P_{e,x}$ denote the error probability of maximum-likelihood decoding when $x\in \cX$ is sent. We have,
\begin{multline*}
P_{e,x} \leq P \bigl(y: W(y\mid x') \geq W(y\mid x) \text{ for some } x'\neq x \mid x \text{ is sent}\bigr) \\
= \kern-1.5em\sum_{\substack{y\colon \exists x'\neq x \\ W(y\mid x') \geq W(y\mid x)}}\kern-1.6em
W(y\mid x)
 \leq \sum_y\kern-1em\sum_{\substack{x'\colon x'\neq x\\ W (y\mid x') \geq W(y\mid x)}}
\kern-1.6em
W(y\mid x)
\leq \sum_y\sum_{x'\colon x'\neq x}\sqrt{W(y\mid x)W(y\mid x')}.
\end{multline*}
Therefore the average error probability is bounded as
\begin{align*}
P_e = \frac 1q \sum_{x\in\cX} P_{e,x} \leq \frac 1q \sum_{x\in\cX} \sum_{x'\neq x} \sum_y \sqrt{W(y\mid x)W(y\mid x')} = (q-1) Z(W).
\end{align*}

\end{proof}

\begin{proposition}\label{rateReliability}
We have the following relationships between $I(W)$ and $Z(W)$.
\begin{align}\label{Eq:rateReliability}
I(W) &\ge \log \frac{q}{1+(q-1) Z(W)}\\[.5ex]
\label{Eq:rateReliability4}
I(W) &\le \log (q/2) +(\log2)\sqrt{1-Z(W)^2}\\[.5ex] 
\label{Eq:rateReliability2}
I(W) & \le 2(q-1)(\log e)\sqrt{1-Z(W)^2}. 
\end{align}
\end{proposition}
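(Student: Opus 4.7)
The unifying idea is to work with the posterior $p_Y(\cdot) = P(X = \cdot \mid Y)$ for uniform input $X$, to write $I(W) = 1 - H(X|Y)$ in base-$q$ logarithms, and to reduce each bound to the single algebraic identity
\[
E_Y[s(Y)^2] = 1 + (q-1)\, Z(W), \qquad s(y) := \sum_{x \in \cX}\sqrt{p(x \mid y)},
\]
which follows from $\sqrt{W(y|x)W(y|x')} = q P_Y(y)\sqrt{p(x|y) p(x'|y)}$ and the definition of $Z(W)$.

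For \eqref{Eq:rateReliability}, I would first prove the R\'enyi-$\tfrac12$ bound $H(p) \le \log s(p)^2$ by applying Jensen's inequality to $H(p) = 2 E_p[\log(1/\sqrt{p_X})]$ and the concavity of $\log$. Averaging over $Y$ and invoking Jensen a second time yields $H(X|Y) \le \log E_Y[s(Y)^2] = \log(1 + (q-1)Z(W))$; substituting into $I(W) = 1 - H(X|Y)$ delivers the claim.

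For \eqref{Eq:rateReliability4}, my plan is to establish the pointwise pair decomposition
\[
H_2(p) \ge \frac{1}{q-1}\sum_{\{x,x'\}}(p_x + p_{x'})\, h_2\!\left(\frac{p_x}{p_x + p_{x'}}\right),
\]
by expanding the right-hand side into $H_2(p) + \frac{1}{q-1}\sum_{\{x,x'\}}(p_x + p_{x'}) \log_2(p_x + p_{x'})$, the second term being non-positive since $p_x + p_{x'} \le 1$. Applied to $p = p_Y$ and averaged over $Y$, the substitution $p(x|y) = W(y|x)/(qP_Y(y))$ turns each pair term into $\tfrac{2}{q}(1 - I_2(W_{\{x,x'\}}))$, giving $H_2(X|Y) \ge 1 - \operatorname{avg}_{x \neq x'} I_2(W_{\{x,x'\}})$. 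Lemma~\ref{lem:iz}(iv) bounds each $I_2(W_{\{x,x'\}}) \le \sqrt{1-Z(W_{\{x,x'\}})^2}$, and Jensen on the concave $t \mapsto \sqrt{1-t^2}$ compresses the average to $\sqrt{1-Z(W)^2}$. Converting $H_2$ to $H_q$ gives the stated inequality in base $q$.

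For \eqref{Eq:rateReliability2}, I would use $I(W) = E_Y[D(p_Y \,\|\, \unif_\cX)]$ together with the chi-squared bound $D(p \,\|\, \unif_\cX) \le (\log e)(q\|p\|_2^2 - 1)$, which follows from $\log t \le (t-1)\log e$. The central step is then $q\|p\|_2^2 - 1 \le 2(q - s(p)^2)$: writing $q\|p\|_2^2 - 1 = (q-1) - 2q\sum_{x<x'}p_x p_{x'}$, applying Cauchy-Schwarz $(\sum_{x<x'}\sqrt{p_x p_{x'}})^2 \le \binom{q}{2}\sum_{x<x'}p_x p_{x'}$, and using $\sum_{x<x'}\sqrt{p_x p_{x'}} = (s(p)^2 - 1)/2$ gives $q\|p\|_2^2 - 1 \le (q - s(p)^2)(q + s(p)^2 - 2)/(q-1)$, after which $s(p)^2 \le q$ yields the bound. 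Taking $E_Y$ and applying the key identity gives $I(W) \le 2(q-1)(\log e)(1 - Z(W))$, and the elementary $1 - Z \le \sqrt{1-Z^2}$ (valid for $Z \in [0,1]$) completes the proof. The hardest part of the plan will be spotting the ad hoc pointwise inequalities---the R\'enyi-$\tfrac12$ bound, the pair entropy decomposition, and the Cauchy-Schwarz step from $\|p\|_2^2$ to $s(p)^2$---but once they are in place, the common identity $E_Y[s(Y)^2] = 1 + (q-1) Z(W)$ closes each argument.
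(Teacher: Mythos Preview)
Your proposal is correct, and for \eqref{Eq:rateReliability} it is exactly the standard Jensen argument underlying Gallager's cutoff-rate bound that the paper simply cites. For \eqref{Eq:rateReliability4} you and the paper both pass through the same intermediate inequality $I(W)\le \log(q/2)+\operatorname{avg}_{x\neq x'}I(W_{\{x,x'\}})$ and then finish identically via the binary bound and Jensen on $t\mapsto\sqrt{1-t^2}$; the difference is only in how that decomposition is obtained. The paper gives a probabilistic ``genie'' proof (introduce a uniformly random pair $(X_1,X_2)$ containing $X$ as side information, compute $I(X;X_1,X_2)=\log(q/2)$), whereas you derive it from the pointwise entropy identity $\tfrac{1}{q-1}\sum_{\{x,x'\}}(p_x+p_{x'})h_2\bigl(\tfrac{p_x}{p_x+p_{x'}}\bigr)=H_2(p)+\tfrac{1}{q-1}\sum_{\{x,x'\}}(p_x+p_{x'})\log_2(p_x+p_{x'})$, dropping the non-positive second term. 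For \eqref{Eq:rateReliability2} the two routes are genuinely different: the paper writes $I(W)=\tfrac1q\sum_x D(W_x\|\bar W)$, bounds each divergence by $q(\log e)\|W_x-\bar W\|_1$, applies the triangle inequality to reach pairwise $\ell_1$ distances, invokes the binary relation $\|W_x-W_{x'}\|_1\le 2\sqrt{1-Z(W_{\{x,x'\}})^2}$, and finishes with Jensen; you instead use the dual expression $I(W)=E_Y[D(p_Y\|\unif)]$, the $\chi^2$ bound $D(p\|\unif)\le(\log e)(q\|p\|_2^2-1)$, and a Cauchy--Schwarz step linking $\|p\|_2^2$ to $s(p)^2$ so that your master identity $E_Y[s(Y)^2]=1+(q-1)Z(W)$ closes the argument. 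Your route has the pleasant by-product of yielding the strictly sharper $I(W)\le 2(q-1)(\log e)\bigl(1-Z(W)\bigr)$ before relaxing to $\sqrt{1-Z(W)^2}$, and it never needs to invoke the binary case; the paper's route, by contrast, keeps the pairwise Bhattacharyya parameters visible throughout, which is more in keeping with the rest of the paper's reduction-to-binary philosophy.
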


Proof is given in the Appendix.

\subsection{Special case: Prime input alphabet sizes}
\label{subsec:prime}
We will see that when the input alphabet size $q$ is a prime number, polarization can be achieved by similar constructions to the one for the binary case. For this purpose, we will equip the input alphabet $\cX$ with an operation `$+$' so that
$(\cX,+)$ forms a group. (This is possible whether or not 
$q$ is prime.) We will let $0$ denote the identity element of $(\cX,+)$. In particular, we may assume that $\cX=\{0,\dotsc,q-1\}$ and that `$+$' denotes modulo-$q$ addition.  Note that when $q$ is prime, this is the only group of order $q$.

As in the binary case, we combine two independent copies of $W$, by choosing the input to each copy as
\begin{align}
\label{eq:q-combine}
\begin{split}
x_1&=u_1+u_2, \\
x_2&=u_2.
\end{split}
\end{align}
We define the channels $W^-$ and $W^+$ through
\begin{align}
\label{eq:q-split}
\begin{split}
W^-(y_1,y_2\mid u_1)&=\sum_{u_2\in\cX} \frac1q W_2(y_1,y_2\mid u_1,u_2) \\
W^+(y_1,y_2,u_1\mid u_2)&=\frac1q W_2(y_1,y_2\mid u_1,u_2),
\end{split}
\end{align}
where again $W_2(y_1,y_2\mid u_1,u_2)=W(y_1\mid u_1+u_2) W(y_2\mid u_2)$.

The main result of this section is the following:
\begin{theorem}\label{thm:prime}
The transformation described in \eqref{eq:q-combine} and \eqref{eq:q-split} polarizes all $q$-ary input channels in the sense of Proposition~\ref{prop:bipolar}, provided that $q$ is a prime number. The rate of polarization under this transformation is the same as in the binary case, in the sense that the block error probabilities of polar codes based on this transformation satisfy \eqref{eq:Perror}.
\end{theorem}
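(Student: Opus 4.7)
My plan is to apply Lemmas~\ref{lem:conv} and~\ref{lem:rate} to the processes $I_n := I(W^{B_1,\ldots,B_n})$ and $T_n := Z(W^{B_1,\ldots,B_n})$, with $I$ and $Z$ defined as in Section~\ref{sec:q-ary}. Conditions (i.1) and (t.1) are immediate from the normalizations (base-$q$ log in $I$, and $Z(W_{\{x,x'\}})\in[0,1]$ averaged). For the martingale condition (i.2), I would mimic the binary derivation of $I(W^+)+I(W^-)=2I(W)$: since $(\cX,+)$ is a group, the map $(u_1,u_2)\mapsto(u_1+u_2,u_2)$ is a bijection on $\cX^2$, so independent uniform $U_1,U_2$ push forward to independent uniform $X_1,X_2$, and the chain rule decomposes $I(U_1,U_2;Y_1,Y_2)=2I(W)$ into $I(W^-)+I(W^+)$. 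Conditions (i\&t.1) and (i\&t.2) then fall out of Proposition~\ref{rateReliability}: inequality~\eqref{Eq:rateReliability} converts ``$I$ close to $1$'' into ``$Z$ close to $0$'', while \eqref{Eq:rateReliability2} converts ``$I$ bounded away from $0$'' into ``$Z$ bounded away from $1$''.

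The substantive step is the $T_n$ dynamics. A direct expansion of~\eqref{eq:q-split} together with~\eqref{defn:zparameter1}--\eqref{defn:zparameter} yields the closed form
\begin{equation*}
Z(W^+) \;=\; \frac{1}{q^2(q-1)}\sum_{\Delta\in\cX\setminus\{0\}} S_\Delta^2,\qquad S_\Delta := \sum_{v\in\cX} Z(W_{\{v,\,v+\Delta\}}),
\end{equation*}
with the convention $Z(W_{\{v,v\}})=1$. Combined with the identity $\sum_{\Delta\ne 0}S_\Delta = q(q-1)Z(W)$, the bound $S_\Delta\le q$, and Cauchy--Schwarz, this gives $Z(W^+)\le (q-1)Z(W)^2$, a squaring-type bound with a multiplicative constant. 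A parallel computation for $W^-$, based on the inequality $\sqrt{\sum_i a_i}\le\sum_i\sqrt{a_i}$ applied to the mixture $W^-(y_1,y_2\mid u_1)=\tfrac1q\sum_{u_2}W(y_1\mid u_1+u_2)W(y_2\mid u_2)$, produces a linear bound $Z(W^-)\le \kappa\,Z(W)$ for some $\kappa=\kappa(q)$, which supplies (t.3) for Lemma~\ref{lem:rate}.

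The main obstacle will be verifying the relaxed version of (t.2) from the remark after Lemma~\ref{lem:conv}: the nondecreasing $f$ it requires must satisfy $f^{(k)}(1-\delta)\le\delta$ for \emph{every} $\delta>0$, whereas $f(x)=\min(1,(q-1)x^2)$ coming from the bound above is strictly contracting only on $x<1/(q-1)$ and saturates at $1$ once $x\ge 1/\sqrt{q-1}$. I would therefore need to piece $f$ together from the quadratic bound and a supplementary observation controlling $Z(W^+)$ (or $Z(W^-)$) when $Z$ is close to $1$. Primeness of $q$ enters precisely here, since $(\cX,+)\cong\mathbb{Z}/q\mathbb{Z}$ has no proper nontrivial subgroups and therefore admits no non-trivial polarization fixed points with $Z(W)\in(0,1)$; this is the ingredient that lets the contraction propagate through the entire interval. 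Once both lemmas apply, \eqref{eq:polar} for $q$-ary $W$ follows by the same counting interpretation as in the binary proof, and \eqref{eq:Perror} follows by combining the conclusion of Lemma~\ref{lem:rate} with $P_e\le(q-1)Z(W)$ from Proposition~\ref{Prop:MLerrorBound} and a union bound over the $N=2^n$ bit-channels of the polar code.
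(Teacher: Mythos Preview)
Your proposal has a genuine gap, and you correctly locate it: with $T_n=Z(W^{B_1,\dots,B_n})$ the exact identity $Z(W^+)=\frac{1}{q^2(q-1)}\sum_{\Delta\ne0}S_\Delta^2$ yields only $Z(W)^2\le Z(W^+)\le (q-1)Z(W)^2$, and the upper bound is vacuous once $Z(W)\ge 1/(q-1)$. The relaxed version of~(t.2) in the remark after Lemma~\ref{lem:conv} therefore fails with $f(x)=\min\{1,(q-1)x^2\}$, since $f^{(k)}(1-\delta)=1$ for all small $\delta$. Your proposed patch---``primeness implies no nontrivial polarization fixed points''---is not a proof: even if no channel with $Z(W)\in(0,1)$ is exactly fixed, that alone does not prevent the process $T_n$ from lingering in $(\delta,1-\delta)$, which is what must be excluded.

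The paper closes the gap by changing the auxiliary process, not by patching the bound. It sets $T_n:=\Zm(W^{B_1,\dots,B_n})$ where $\Zm(W)=\max_{d\ne0}Z_d(W)$ and $Z_d(W)=\frac1q\sum_x Z(W_{\{x,x+d\}})$. A direct computation gives $Z_d(W^+)=Z_d(W)^2$ for every $d$, hence $\Zm(W^+)=\Zm(W)^2$ exactly, so~(t.2) holds without any constant. Primality is then used not at~(t.2) but at~(i\&t.1), via Lemma~\ref{lem:Zmax}: since every nonzero $d$ generates $\mathbb{Z}/q\mathbb{Z}$, a triangle inequality on the Hellinger-type distances $\sqrt{1-Z(W_{\{x,x'\}})}$ shows that $\Zm(W)$ close to~$1$ forces all pairwise $Z(W_{\{x,x'\}})$ close to~$1$, hence $Z(W)$ close to~$1$, hence (by Proposition~\ref{rateReliability}) $I(W)$ close to~$0$. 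In other words, your choice $T_n=Z$ makes~(i\&t.1) easy and~(t.2) hard; the paper's choice $T_n=\Zm$ makes~(t.2) exact and pushes the work into~(i\&t.1), where the ``no proper subgroup'' structure can actually be exploited as a concrete inequality. Condition~(t.3) is handled similarly: the paper shows $Z_d(W^-)\le 2Z_d(W)+(q-2)\Zm(W)^2\le q\,\Zm(W)$.
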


To prove Theorem~\ref{thm:prime} we first rewrite $Z(W)$ as
$$
Z(W) = \frac{1}{q-1} \sum_{d\neq0} Z_d(W),
$$
where we define
$$
Z_d(W) \defn \frac1q \sum_{x\in\cX} Z(W_{\{x,x+d\}}), \qquad d\neq 0.
$$
We also define
$$
\Zm(W)\defn \max_{d\neq0} Z_d(W).
$$
We will use the following lemma in the proof.
\begin{lemma}\label{lem:Zmax}
Given a channel $W$ whose input alphabet size $q$ is prime, if $\Zm(W)\geq 1-\delta$, then $Z(W)\geq 1-q(q-1)^2\delta$ for all $\delta>0$. 
\end{lemma}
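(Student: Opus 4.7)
The plan is to exploit two facts: (a) the Hellinger distance $H(P,Q) = \sqrt{1 - \sum_y \sqrt{P(y)Q(y)}}$ is a metric on probability distributions, which translates directly into a triangle inequality for the pairwise Bhattacharyya parameters, namely
\[
\sqrt{1 - Z(W_{\{x,z\}})} \;\le\; \sqrt{1 - Z(W_{\{x,y\}})} + \sqrt{1 - Z(W_{\{y,z\}})};
\]
and (b) since $q$ is prime, any nonzero $d \in \cX$ generates the additive group, so every nonzero $d' \in \cX$ can be written as $d' = kd$ for some integer $1 \le k \le q-1$.

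First I would fix a nonzero $d$ attaining $Z_d(W) = \Zm(W) \ge 1-\delta$, which gives $\sum_{x\in\cX}\bigl(1 - Z(W_{\{x,x+d\}})\bigr) = q\bigl(1 - Z_d(W)\bigr) \le q\delta$. For another nonzero $d' = kd$, iterating the triangle inequality along the chain $x, x+d, x+2d, \dots, x+kd$ gives, for every $x$,
\[
\sqrt{1 - Z(W_{\{x, x+kd\}})} \;\le\; \sum_{i=0}^{k-1} \sqrt{1 - Z(W_{\{x+id,\, x+(i+1)d\}})}.
\]
Summing on $x$, translation-invariance of the right-hand side under shifts of the summation index collapses it to $k\sum_x \sqrt{1 - Z(W_{\{x,x+d\}})}$, which is at most $k\sqrt{q}\cdot\sqrt{q\delta} = kq\sqrt{\delta}$ by Cauchy--Schwarz. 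Since $\sum_x b_x^2 \le (\sum_x b_x)^2$ for nonnegative $b_x$, squaring then yields $\sum_x \bigl(1 - Z(W_{\{x,x+kd\}})\bigr) \le k^2 q^2 \delta$, i.e., $1 - Z_{kd}(W) \le k^2 q \delta$.

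Finally, averaging over all $q-1$ nonzero $d'$ (equivalently, over $k = 1, \dots, q-1$) and using the crude bound $\sum_{k=1}^{q-1} k^2 \le (q-1)^3$,
\[
1 - Z(W) \;=\; \frac{1}{q-1}\sum_{k=1}^{q-1}\bigl(1 - Z_{kd}(W)\bigr) \;\le\; \frac{q\delta}{q-1}(q-1)^3 \;=\; q(q-1)^2\,\delta,
\]
which is the claim. The main thing to get right is the two-step squaring/unsquaring: passing from the averaged $(1-Z)$-hypothesis to a $\sqrt{1-Z}$-sum via Cauchy--Schwarz, applying the linear triangle inequality on the square roots, and then converting back to $(1-Z)$ by squaring. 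This is precisely what lets a single $\delta$ propagate to all $Z_{d'}$ linearly in $\delta$ rather than as $\sqrt\delta$. Primality of $q$ is used essentially in the chaining step, since otherwise the subgroup $\langle d\rangle$ could be proper and some $d'$ would not be reachable as a multiple of $d$.
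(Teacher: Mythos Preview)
Your proof is correct and follows essentially the same approach as the paper: the Hellinger-distance triangle inequality for $\sqrt{1-Z}$, chained along multiples of a generator $d$ (which is where primality of $q$ enters). The paper's execution is slightly more direct: instead of summing over $x$ and invoking Cauchy--Schwarz, it notes that nonnegativity of each summand in $\sum_x\bigl(1-Z(W_{\{x,x+d\}})\bigr)\le q\delta$ forces $1-Z(W_{\{x,x+d\}})\le q\delta$ for \emph{every} $x$, whence the triangle inequality applied pointwise yields $1-Z(W_{\{x,x'\}})\le(q-1)^2q\delta$ for all pairs $x\neq x'$, and the claimed bound on $Z(W)$ follows at once.
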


\begin{proof}
Let $d$ be such that $\Zm(W)=Z_d(W)$, and note that $Z_d(W)\geq 1-\delta$ implies
$$
1-Z(W_{\{x,x+d\}}) \leq q\delta \quad \text{for all } x\in\cX.
$$
For a given $x\in\cX$ define
\begin{align*}
a_y & = \sqrt{W(y\mid x)}-\sqrt{W(y\mid x+d)}, \\
b_y & = \sqrt{W(y\mid x+d)}-\sqrt{W(y\mid x+d+d)}. 
\end{align*}
for all $y\in\cY$. The triangle inequality states that
$$
\left(\sum_y (a_y+b_y)^2\right)^{\kern-3pt 1/2} \leq \left(\sum_y a_y^2\right)^{\kern-3pt 1/2} + \left(\sum_y b_y^2\right)^{\kern-3pt 1/2},
$$
or equivalently, that
\begin{align}
\begin{split}
\label{eq:Z-triangle}
\sqrt{1-Z(W_{\{x,x+d+d\}})} & \leq 
	\sqrt{1-Z(W_{\{x,x+d\}})} + \sqrt{1-Z(W_{\{x+d,x+d+d\}})}\\
& \leq 2\sqrt{q\delta}.
\end{split}
\end{align}
On the other hand, since $q$ is prime, the input alphabet can be written as
$$
\cX=\big\{x,x+d,x+d+d,\dotsc,x+\underbrace{d+\dotsm+d}_{q-1\text{ times}}\big\}
$$
for any $d\neq0$ and $x\in\cX$. Hence, applying inequality \eqref{eq:Z-triangle} repeatedly yields
$$
\sqrt{1-Z(W_{\{x,x'\}})} \leq (q-1) \sqrt{q\delta}
$$
for all $x,x'\in\cX$, which implies
$$
Z(W) = \frac{1}{q(q-1)} \sum_{x,x':x\neq x'} Z(W_{\{x,x'\}}) \geq 1-q(q-1)^2\delta.
$$
\end{proof}

\begin{proof}[Proof of Theorem~\ref{thm:prime}]
The proof is similar to the one for the binary case: Let $B_1,B_2,\dotsc$ be i.i.d.\ $\{+,-\}$-valued random variables taking the two values with equal probability. Define the random processes
$$
I_n := I_n(B_1,\dotsc,B_n) = I(W^{B_1,\dotsc,B_n})
$$
and
$$
T_n := T_n(B_1,\dotsc,B_n) = \Zm(W^{B_1,\dotsc,B_n}),
$$
with $I_0=I(W)$ and $T_0=\Zm(W)$. It suffices to show that $\{I_n\}$ and $\{T_n\}$ satisfy the conditions of Lemmas~\ref{lem:conv} and \ref{lem:rate}: Conditions (i.1), (i.2), and (t.1) hold trivially. Also, by \eqref{Eq:rateReliability} and \eqref{Eq:rateReliability2} in Proposition~\ref{rateReliability}, for any $\epsilon>0$ there exists $\delta>0$ such that 
$$
I(W)\in (\epsilon,1-\epsilon) \text{ implies } Z(W)\in (\delta,1-\delta).
$$
Furthermore, it follows from Lemma~\ref{lem:Zmax} that for any $\delta>0$ 
$$
Z(W) \in (\delta, 1-\delta) \text{ implies } \Zm(W)\in(\delta, 1-\delta/[q(q-1)^2]),
$$
from which (i\&t.1) follows. To show (t.2), we write
\begin{align*}
Z_d(W^+) & = \frac1q \sum_x Z(W^+_{\{x,x+d\}}) \\
& = \frac1q \sum_x \frac1q \sum_{y_1,y_2,u} \sqrt{W(y_1\mid x+u) 
	W(y_1\mid x+d+u)} \sqrt{W(y_2\mid x) W(y_2\mid x+d)} \\
& = \frac1q \sum_x Z(W_{\{x,x+d\}}) \frac1q \sum_u Z(W_{\{x+u,x+u+d\}}) \\
& = Z_d(W)^2,
\end{align*}
which implies $\Zm(W^+)=\Zm(W)^2$, or equivalently $T_{n+1}=T_n^2$ when $B_{n+1}=+$. Similarly, one can bound $Z_d(W^-)$ as
\begin{align*}
Z_d(W^-) & = \frac1q \sum_x Z(W^-_{\{x,x+d\}}) \\
& = \frac1q \sum_x \sum_{y_1,y_2} \frac1q
	\sqrt{\sum_u W(y_1\mid x+u)W(y_2\mid u)
	\sum_v W(y_1\mid x+d+v)W(y_2\mid v)} \\
& \leq \frac1q \sum_x \sum_{y_1,y_2}\sum_{u,v} \frac1q
	\sqrt{W(y_1\mid x+u)W(y_2\mid u)
	W(y_1\mid x+d+v)W(y_2\mid v)} \\
& = \frac1q \sum_{u}\frac1q \sum_x \sum_{y_1}
	\sqrt{W(y_1\mid x+u)
	W(y_1\mid x+d+u)} \\
& \hspace{5em} + \frac1q \sum_{u,v:u\neq v} 
	\sum_{y_2} \sqrt{W(y_2\mid u) W(y_2\mid v)}
	\frac1q\sum_x \sum_{y_1}\sqrt{W(y_1\mid x+u) W(y_1\mid x+d+v)} \\
& = Z_d(W) + \sum_{\Delta\neq 0} \frac1q \sum_{u}
	\sum_{y_2} \sqrt{W(y_2\mid u) W(y_2\mid u+\Delta)}
	\frac1q\sum_x \sum_{y_1}\sqrt{W(y_1\mid x+u) W(y_1\mid x+d+u+\Delta)} \\
& = 2Z_d(W)+ \sum_{\substack{\Delta\neq0\\ d+\Delta\neq0}} Z_\Delta(W)Z_{d+\Delta}(W) \\
& \leq 2Z_d(W)+(q-2)\Zm(W)^2.
\end{align*}
Thus we have $\Zm(W^-)\leq 2\Zm(W)+(q-2)\Zm(W)^2\leq q\Zm(W)$, which implies (t.3). Finally, (i\&t.2) follows from \eqref{Eq:rateReliability4} and the relation $\Zm(W)\leq qZ(W)$.
\end{proof}

\subsection{Arbitrary input alphabet sizes}
The proof of Lemma~\ref{lem:Zmax}, and hence of Theorem~\ref{thm:prime}, depends critically on the assumption that $q$ is a prime number, and does not extend trivially to the case of composite input alphabet sizes. In fact, it is possible to find channels that the transformation given in the previous section will not polarize:
\begin{example}
Consider the quaternary-input channel $W\colon\{0,1,2,3\}\to\{0,1\}$ defined by the transition probabilities $W(0\mid0)$$=W(0\mid2)=W(1\mid1)=W(1\mid3)=1$,
with $I(W) = \log 2$. If $W$ is combined/split using the transformation described in \eqref{eq:q-combine} and \eqref{eq:q-split}, where $+$ denotes modulo-$4$ addition, then the channels $W^+$ and $W^-$ are statistically equivalent to $W$. Therefore $I(W^-)=I(W)=I(W^+)$. 
\end{example}

For the general case, our first attempt at finding a polarizing transformation is to let 
\begin{align*}
x_1&=u_1+u_2\\
x_2&=\pi(u_2)
\end{align*}
where `$+$' denotes the group operation, and $\pi$ is a fixed permutation on $\cX$.  In this case one can compute easily
that
$$
Z(W^+)=\frac1{q(q-1)}\sum_{x,x':x\neq x'} Z(W_{\{\pi(x),\pi(x')\}})
	\frac1q\sum_{u} Z(W_{\{u+x,u+x'\}}).
$$
To be able to mimic the proof of Proposition~\ref{prop:bipolar} one would
want that $Z(W^+)=Z(W)^2$. However, as the value
of the inner sum above may depend on $(x,x')$, the equality $Z(W^+)=Z(W)^2$ will not necessarily hold in general.

As we will see, however, the average value of the above $Z(W^+)$ over all possible
choices of $\pi$ is $Z(W)^2$.  For this reason, it is appropriate
to think of a randomized channel combining/splitting operation, where the
randomness is over the choice of $\pi$.  To accomodate this randomness, again let $(U_1,U_2)$ denote the independent and uniformly distributed inputs, and let
$\Pi$ be chosen uniformly at random from the set of permutations
$\cP_\cX$, independently of $(U_1,U_2)$, and revealed to the receiver.
Set
\begin{equation}
\label{eq:q-transform}
(X_1,X_2)=(U_1+U_2,\Pi(U_2)).
\end{equation}
 Observe that
\begin{align*}
I(U_1,U_2;Y_1,Y_2,\Pi) &= 2I(W)\\
	&=I(U_1;Y_1,Y_2,\Pi)+I(U_2;Y_1,Y_2,U_1,\Pi),
\end{align*}
and 
that
we may define the channels $W^-\colon \cX \to \cY^2 \times \cP_\cX$ and
$W^+\colon \cX \to \cY^2 \times \cX \times \cP_\cX$ so that the terms on
the right hand side equal $I(W^-)$ and $I(W^+)$:
\begin{align}
\label{eq:qw-}
W^-(y_1,y_2, \pi \mid u_1) &= \sum_{u_2\in \cX} \frac{1}{q\cdot q!} W_2(y_1,y_2\mid u_1,u_2)\\
\label{eq:qw+}
W^+(y_1,y_2,u_1, \pi \mid u_2) &= \frac{1}{q\cdot q!} W_2(y_1,y_2\mid u_1,u_2),
\end{align}
where $W_2(y_1,y_2\mid u_1,u_2)=W(y_1\mid u_1+u_2) W(y_2\mid \pi(u_2))$.

\begin{theorem}
The transformation described in~\eqref{eq:q-transform},
\eqref{eq:qw-}, and \eqref{eq:qw+} polarizes all discrete memoryless channels $W$ in the sense of Proposition~\ref{prop:bipolar}.
\end{theorem}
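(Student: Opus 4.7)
The plan is to apply Lemma~\ref{lem:conv} with
\[
I_n = I(W^{B_1,\dots,B_n}), \qquad T_n = Z(W^{B_1,\dots,B_n}),
\]
where $Z$ is the average Bhattacharyya distance from~\eqref{defn:zparameter} and $W^{B_1,\dots,B_n}$ is the channel produced by $n$ successive applications of the randomized transformation~\eqref{eq:q-transform}--\eqref{eq:qw+}. A conceptual point that makes this work cleanly is that because $\Pi$ is built into the \emph{output alphabet} of $W^\pm$ rather than being an external parameter, the channel $W^{B_1,\dots,B_n}$ is a \emph{deterministic} function of $B_1,\dots,B_n$, so both $I_n$ and $T_n$ are measurable with respect to $\cF_n=\sigma(B_1,\dots,B_n)$, giving (i.1) and (t.1) immediately. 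The martingale property (i.2) follows from the chain-rule identity $I(W^-)+I(W^+)=2I(W)$ noted just before~\eqref{eq:qw-}, and condition (i\&t.1) follows from Proposition~\ref{rateReliability} exactly as in the proof of Theorem~\ref{thm:prime}.

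The core of the argument is the verification of (t.2) in the exact form $Z(W^+)=Z(W)^2$. Plugging~\eqref{eq:qw+} into the definition of $Z(W^+_{\{u,u'\}})$ for distinct $u,u'\in\cX$, the product structure $W_2(y_1,y_2\mid u_1,u_2)=W(y_1\mid u_1+u_2)W(y_2\mid\pi(u_2))$ lets the Bhattacharyya factor as a product over the $(y_1,v)$ and $(y_2,\pi)$ sums:
\[
Z(W^+_{\{u,u'\}}) = \Biggl(\frac{1}{q!}\sum_{\pi\in\cP_\cX} Z(W_{\{\pi(u),\pi(u')\}})\Biggr)\Biggl(\frac{1}{q}\sum_{v\in\cX} Z(W_{\{v+u,v+u'\}})\Biggr).
\]
For any fixed $u\neq u'$ the ordered pair $(\Pi(u),\Pi(u'))$ is uniform over ordered pairs of distinct elements when $\Pi$ is uniform on $\cP_\cX$, so the first factor equals $Z(W)$. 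Averaging the remaining expression over $u\neq u'$ and reindexing the inner sum via $v\mapsto v+u$ reduces the double sum to $q(q-1)Z(W)$, yielding $Z(W^+)=Z(W)^2$ on the nose.

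With all hypotheses of Lemma~\ref{lem:conv} verified, $I_n$ converges almost surely to a $\{0,1\}$-valued limit, so $P(I_n\in(\delta,1-\delta))\to 0$ for every $\delta>0$; since $(B_1,\dots,B_n)$ is uniform on $\{+,-\}^n$ this is precisely the randomized analogue of~\eqref{eq:polar}. The main obstacle is the algebraic identity $Z(W^+)=Z(W)^2$: for composite $q$ no single deterministic permutation can achieve exact squaring (witness the quaternary example at the start of this subsection), and it is precisely the averaging over $\pi$ --- obtained for free by including $\pi$ in the channel output --- that restores the squaring identity. Once this is in hand, the rest of the proof is a replay of the binary case through the now alphabet-agnostic Lemma~\ref{lem:conv}.
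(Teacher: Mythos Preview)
Your proof is correct and follows essentially the same route as the paper: define $I_n=I(W^{B_1,\dots,B_n})$ and $T_n=Z(W^{B_1,\dots,B_n})$, note that (i.1), (i.2), (t.1) are immediate and (i\&t.1) comes from Proposition~\ref{rateReliability}, and verify (t.2) by factoring $Z(W^+_{\{u,u'\}})$ and using that the $\pi$-average of $Z(W_{\{\pi(u),\pi(u')\}})$ is exactly $Z(W)$. The only cosmetic difference is that the paper writes the full expression for $Z(W^+)$ at once rather than first isolating $Z(W^+_{\{u,u'\}})$; your explicit remark that $\Pi$ lives in the output (so $W^{B_1,\dots,B_n}$ is a deterministic function of the $B_i$) is a useful clarification the paper leaves implicit.
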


\begin{proof}
As in the binary case, we will let $B_1,B_2,\dotsc$ be i.i.d., $\{+,-\}$-valued random variables taking the two values with equal probability, and define
\begin{align*}
I_n &:= I_n(B_1,\dotsc,B_n) = I(W^{B_1,\dotsc,B_n}), \\
T_n &:= T_n(B_1,\dotsc,B_n) = Z(W^{B_1,\dotsc,B_n}),
\end{align*}
with $I_0=I(W)$ and $T_0=Z(W)$. We will prove the theorem by showing that the processes $\{I_n\}$ and $\{T_n\}$ satisfy the conditions of Lemma~\ref{lem:conv}. 
Since (i.1), (i.2), (t.1) are readily seen to hold, and
(i\&t.1) is implied by inequalities~\eqref{Eq:rateReliability} and~\eqref{Eq:rateReliability2} in Proposition~\ref{rateReliability}, we only need to
show (t.2).  To that end observe that
$$
Z(W^+)=\frac1{q(q-1)}\sum_{x,x':x\neq x'}
	\frac1{q!}\sum_{\pi} Z(W_{\{\pi(x),\pi(x')\}})
	\frac1q\sum_u Z(W_{\{u+x,u+x'\}}).
$$
Note that for any $x,x'$ the value of $\frac1{q!}\sum_\pi
Z(W_{\{\pi(x),\pi(x')\}})$ is equal to $Z(W)$, and for any $u$, the value
of $\frac1{q(q-1)}\sum_{x,x'}Z(W_{\{u+x,u+x'\}})$ also equals $Z(W)$.
Thus, $Z(W^+)=Z(W)^2$.
\end{proof}

As $Z(W)$ upper bounds the error probability of uncoded transmission (cf.\ Proposition~\ref{Prop:MLerrorBound}), in order to bound the error probability of $q$-ary polar codes it suffices to show that the hypotheses of Lemma~\ref{lem:rate} hold. Since (i\&t.2) is already implied by~\eqref{Eq:rateReliability4}, it remains to show (t.3):

\begin{proposition}
\label{prop:z-}
For the transformation described in~\eqref{eq:q-transform}, \eqref{eq:qw-}, and \eqref{eq:qw+}, we have
$$
Z(W)\leq Z(W^-) \leq \min \left\{qZ(W), 2Z(W)+(q-1)Z(W)^2 \right\}.
$$
\end{proposition}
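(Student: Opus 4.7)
The plan is to work directly from
$$
Z(W^-_{\{x,x'\}}) = \sum_{y_1,y_2,\pi} \sqrt{W^-(y_1,y_2,\pi\mid x)\,W^-(y_1,y_2,\pi\mid x')}
$$
with $W^-(y_1,y_2,\pi\mid u_1) = \frac{1}{q\cdot q!}\sum_{u_2} W(y_1\mid u_1+u_2)\,W(y_2\mid \pi(u_2))$, and then to average over $x\ne x'$ to obtain $Z(W^-)$. The three inequalities separate naturally into an upper and a lower bound on $Z(W^-_{\{x,x'\}})$.

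For the upper bound, I would open the square root by the subadditivity inequality $\sqrt{\sum_k c_k}\le\sum_k\sqrt{c_k}$, introducing independent summation indices $u_2$ and $v_2$ for the two copies of $W^-$. Summing over $y_1,y_2$ then factorizes as the product of two Bhattacharyya distances, yielding
$$
Z(W^-_{\{x,x'\}}) \le \frac{1}{q\cdot q!}\sum_{\pi}\sum_{u_2,v_2} Z(W_{\{x+u_2,\,x'+v_2\}})\,Z(W_{\{\pi(u_2),\,\pi(v_2)\}}).
$$
I would split the inner sum into the diagonal ($u_2=v_2$) and the off-diagonal ($u_2\ne v_2$) parts. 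On the diagonal, $Z(W_{\{\pi(u_2),\pi(u_2)\}})=1$, and after averaging over $x\ne x'$ the translation invariance $(x,x')\mapsto(x+u_2,x'+u_2)$ gives a contribution of exactly $Z(W)$. Off the diagonal, $(\pi(u_2),\pi(v_2))$ is uniform over ordered pairs of distinct elements as $\pi$ varies uniformly over $\cP_\cX$, so $\frac{1}{q!}\sum_\pi Z(W_{\{\pi(u_2),\pi(v_2)\}}) = Z(W)$, pulling a factor of $Z(W)$ out. The remaining sum $\sum_{x\ne x'}\sum_{u_2\ne v_2} Z(W_{\{x+u_2,x'+v_2\}})$, evaluated via $a=x+u_2$, $b=x'+v_2$, yields a contribution of $Z(W)+(q-2)Z(W)^2$. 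Combining gives $Z(W^-)\le 2Z(W)+(q-2)Z(W)^2\le 2Z(W)+(q-1)Z(W)^2$, and since $Z(W)\le 1$ this in turn is bounded by $qZ(W)$.

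For the lower bound, I would invoke Cauchy--Schwarz in the opposite direction: $\sum_i\sqrt{a_ib_i}\le\sqrt{(\sum_ia_i)(\sum_ib_i)}$, applied with $a_{u_2}=W(y_1\mid x+u_2)\,W(y_2\mid\pi(u_2))$ and $b_{u_2}=W(y_1\mid x'+u_2)\,W(y_2\mid\pi(u_2))$. This produces a lower bound on $\sqrt{W^-(y_1,y_2,\pi\mid x)\,W^-(y_1,y_2,\pi\mid x')}$; summing out $y_2$ via $\sum_{y_2}W(y_2\mid\pi(u_2))=1$ and summing out $\pi$ collapse those factors, leaving $Z(W^-_{\{x,x'\}})\ge\frac{1}{q}\sum_{u_2}Z(W_{\{x+u_2,\,x'+u_2\}})$. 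Averaging over $x\ne x'$ then gives $Z(W^-)\ge Z(W)$ by translation invariance.

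The main obstacle I anticipate is the off-diagonal bookkeeping: under the substitution $a=x+u_2$, $b=x'+v_2$, one must carefully count how many quadruples $(x,x',u_2,v_2)$ (with both $x\ne x'$ and $u_2\ne v_2$) map to a given ordered pair $(a,b)$, distinguishing $a=b$ (which contribute $Z(W_{\{a,a\}})=1$) from $a\ne b$ (which contribute $Z(W_{\{a,b\}})$). The count hinges on the observation that $u_2-v_2$ must avoid both $0$ and $a-b$, yielding $q(q-2)$ pairs when $a\ne b$ and $q(q-1)$ pairs when $a=b$; everything else is routine.
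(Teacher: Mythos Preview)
Your proof is correct and follows essentially the same strategy as the paper's: subadditivity of the square root for the upper bound, a diagonal/off-diagonal split, and Cauchy--Schwarz (equivalently, concavity of $Z(W_{\{x,x'\}})$ in $W$, which is what the paper invokes) for the lower bound. The one structural difference is that the paper fixes the permutation and proves the stronger per-$\pi$ statement $Z(W)\le Z(W^{(\pi)})\le\min\{qZ(W),\,2Z(W)+(q-1)Z(W)^2\}$ for every $\pi$, and only then averages; you instead average over $\pi$ at the outset, which lets you replace $\tfrac{1}{q!}\sum_\pi Z(W_{\{\pi(u_2),\pi(v_2)\}})$ by an exact factor of $Z(W)$ on the off-diagonal and carry out the $(a,b)$-counting precisely, yielding the slightly sharper constant $(q-2)$ in place of $(q-1)$. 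Your route is thus a bit more efficient for the proposition as stated, while the paper's route buys the per-$\pi$ inequality, which may be of independent interest. Your off-diagonal bookkeeping is right: for $a\ne b$ the constraints $u_2\ne v_2$ and $u_2-v_2\ne a-b$ leave $q(q-2)$ pairs, and for $a=b$ they leave $q(q-1)$.
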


Proof is given in the Appendix.

We have seen that choosing the transformation $W\mapsto(W^-,W^+)$ in a random fashion from a set of transformations of size $q!$ yields $Z(W^+)=Z(W)^2$, leading to channel polarization. In particular, for each $W$ there is at least one transformation with $Z(W^+)\leq Z(W)^2$. Therefore, randomness is needed only in order to find such transformations at code construction stage, and not for encoding/decoding.

In a channel polarization construction of size $N$, there are $(2N-1)$ channels ($W$, $W^-$, $W^+$, $W^{--}$, $W^{-+}$, etc.) in the  recursion tree of code construction. For each channel $\tilde{W}$ residing in any one of the $(N-1)$ internal nodes of this tree, we need to find a suitable permutation $\pi$ such that $Z(\tilde{W}^+)\leq Z(\tilde{W})^2$. Thus, the total complexity of finding the right permutations scales as $q!(N-1)$, in the worst case where all $q!$ permutations are considered. Recall that polar code construction also requires determining the {\sl frozen} coordinates, which is a task of complexity $\Omega(N)$ at best. So, the order of polar code construction complexity is not altered by the introduction of randomization.

\section{Complementary Remarks}
\subsection{Reduction of randomness}

The transformation $(u_1,u_2)\mapsto(x_1,x_2)$ described above uses
a random permutation to satisfy $Z(W^+)=Z(W)^2$.  This amount of
randomness --- over a set of size $q!$ --- is in general not necessary,
randomization over a set of size $(q-1)!$ is sufficient:
\begin{theorem}
If the random permutation $\Pi$ that defines \eqref{eq:q-transform} is
chosen uniformly over the set of permutations for which $0$ is a fixed
point, the resulting transformation yields $Z(W^+)=Z(W)^2$ and thus
is polarizing.
\end{theorem}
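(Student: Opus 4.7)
The plan is to reduce to the identity already established in the proof of the preceding theorem, namely that $\frac{1}{q!}\sum_{\sigma\in\cP_\cX} Z(W_{\{\sigma(v),\sigma(v')\}}) = Z(W)$ for every $v\neq v'$. The key tool is a coset decomposition of $\cP_\cX$: writing $\cP_\cX^{0}$ for the subgroup of permutations fixing $0$ and $\tau_c\colon x\mapsto x+c$ for translation by $c\in\cX$, I would first verify that $(c,\pi)\mapsto \pi\circ\tau_c$ is a bijection from $\cX\times\cP_\cX^{0}$ onto $\cP_\cX$. Injectivity: from $\pi_1\circ\tau_{c_1}=\pi_2\circ\tau_{c_2}$ one derives $\pi_2^{-1}\circ\pi_1=\tau_{c_2-c_1}$, and the only translation fixing $0$ is the identity, forcing $c_1=c_2$ and $\pi_1=\pi_2$; surjectivity is then immediate from the cardinality count $q\cdot(q-1)!=q!$.

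With this bijection in hand, I would rerun the computation of $Z(W^+)$ from the preceding proof, with $\pi$ now restricted to $\cP_\cX^{0}$ and the normalizing factor $(q-1)!$ in place of $q!$. Expanding the Bhattacharyya distance of the restricted channel gives
$$
Z(W^+_{\{x,x'\}}) = \frac{1}{q(q-1)!}\sum_{u_1}Z(W_{\{u_1+x,u_1+x'\}})\sum_{\pi\in\cP_\cX^{0}}Z(W_{\{\pi(x),\pi(x')\}}).
$$
Summing over ordered pairs $x\neq x'$ and substituting $v=u_1+x$, $v'=u_1+x'$ (a bijection of the distinct-pair set for each fixed $u_1$) lets me rewrite $\pi(x)=(\pi\circ\tau_{-u_1})(v)$. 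The double sum over $(u_1,\pi)\in\cX\times\cP_\cX^{0}$ then collapses via the coset bijection into a single sum over $\sigma\in\cP_\cX$, which by the earlier identity evaluates to $q!\cdot Z(W)$ independently of $(v,v')$. The remaining outer sum contributes $\sum_{v\neq v'}Z(W_{\{v,v'\}}) = q(q-1)Z(W)$, and the normalization constants cancel to yield $Z(W^+)=Z(W)^2$.

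The main obstacle is spotting the coset bijection together with the change of variables $v=u_1+x$ that fuse the $u_1$-translation and the restricted permutation $\pi$ into a single uniformly distributed element of $\cP_\cX$; once this reduction is identified, the remainder is essentially a verbatim rerun of the earlier calculation. Polarization then follows via Lemma~\ref{lem:conv}, since the verification of conditions (i.1), (i.2), (t.1), and (i\&t.1) depends only on $I(W^-)+I(W^+)=2I(W)$ and Proposition~\ref{rateReliability}, neither of which references the specific permutation ensemble used.
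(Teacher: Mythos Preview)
The paper states this theorem without proof, so there is no argument to compare against. Your proposal is correct and complete: the coset decomposition $(c,\pi)\mapsto\pi\circ\tau_c$ really is a bijection $\cX\times\cP_\cX^{0}\to\cP_\cX$ (your injectivity check is fine, and cardinality gives surjectivity), and once the triple sum over $(x,x',u_1,\pi)$ is reparametrized via $v=u_1+x$, $v'=u_1+x'$, the inner sum over $(u_1,\pi)$ becomes a single sum over $\sigma\in\cP_\cX$, at which point the identity $\frac{1}{q!}\sum_{\sigma}Z(W_{\{\sigma(v),\sigma(v')\}})=Z(W)$ from the preceding theorem applies verbatim. The constants then collapse exactly as you say.

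The only remark worth making is that a direct computation of $\frac{1}{(q-1)!}\sum_{\pi\in\cP_\cX^{0}}Z(W_{\{\pi(x),\pi(x')\}})$ does \emph{not} give $Z(W)$ pairwise (the value depends on whether $0\in\{x,x'\}$), so your device of absorbing the $u_1$-translation into the permutation is essential rather than merely convenient; this is the genuine idea in the argument, and you have identified it correctly. Your closing paragraph on polarization via Lemma~\ref{lem:conv} is also correct: the hypotheses other than (t.2) are insensitive to the permutation ensemble.
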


A more significant reduction in randomness can be attained when
the input alphabet $\cX$ can be equipped with operations
$(+,\cdot)$ to form an algebraic field --- this is possible if and only if $q$ is a prime power. A random variable taking only
$q-1$ values is sufficient in this case. (We have already seen that no randomization is needed when $q$ is prime.) To see this, pick $R$ to be uniformly distributed
from the non-zero elements $\cX_*$ of $\cX$, reveal it to the receiver and set
\begin{equation}
\label{eq:f-transform}
(x_1,x_2)=(u_1+u_2, R\cdot u_2).
\end{equation}
As was above we have
$$
2I(W)=I(U_1,U_2;Y_1,Y_2,R)
=I(U_1;Y_1,Y_2,R)+I(U_2;Y_1,Y_2,U_1,R)= I(W^-)+I(W^+)
$$
provided that we define $W^-\colon\cX\to\cY^2\times\cX_*$ and $W^+:\cX\to\cY^2\times\cX\times\cX_*$ as
\begin{align}
\label{eq:fw-}
W^-(y_1,y_2,r|u_1)&=\frac1{q(q-1)}\sum_{u_2\in\cX} W(y_1|u_1+u_2)W(y_2|r\cdot u_2),\\
\label{eq:fw+}
W^+(y_1,y_2,u_1,r|u_2)&=\frac1{q(q-1)} W(y_1|u_1+u_2)W(y_2|r\cdot u_2).
\end{align}
\begin{theorem}
The transformation described in~\eqref{eq:f-transform},
\eqref{eq:fw-}, and \eqref{eq:fw+} polarizes all $q$-ary input channels in the sense of Proposition~\ref{prop:bipolar}, provided that $q$ is a prime power.
\end{theorem}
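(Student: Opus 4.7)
The plan is to follow the template of the two preceding theorems: define $I_n := I(W^{B_1,\dotsc,B_n})$ and $T_n := Z(W^{B_1,\dotsc,B_n})$, where $Z$ is the averaged Bhattacharyya parameter of~\eqref{defn:zparameter}, and verify the hypotheses of Lemma~\ref{lem:conv}. Properties (i.1) and (t.1) are immediate; (i.2) follows from the chain-rule identity $I(W^+)+I(W^-)=2I(W)$ displayed just before~\eqref{eq:fw-}; and (i\&t.1) follows verbatim from inequalities~\eqref{Eq:rateReliability} and~\eqref{Eq:rateReliability2} of Proposition~\ref{rateReliability}. The only substantive task is to verify (t.2), the identity $Z(W^+)=Z(W)^2$, and this is the one place where the prime-power hypothesis enters.

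To establish (t.2), I would expand $Z(W^+_{\{x,x'\}})$ directly from~\eqref{eq:fw+}. The integrand factors into a $(y_1,u_1)$-part and a $(y_2,r)$-part, yielding, with $d:=x'-x$,
$$
Z(W^+_{\{x,x'\}})
=\frac{1}{q(q-1)}\Bigl(\sum_{u_1\in\cX}Z(W_{\{u_1+x,\,u_1+x+d\}})\Bigr)\Bigl(\sum_{r\in\cX_*}Z(W_{\{rx,\,rx'\}})\Bigr).
$$
Translating $u_1$ by $x$, the inner $u_1$-sum equals $q\,Z_d(W)$ in the notation of Section~\ref{subsec:prime}, since $u_1+x$ sweeps $\cX$.

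Next, I would average over pairs $x\neq x'$. Writing $x'=x+d$ and summing on $x$ for fixed $r$ and $d$, the substitution $y=rx$ --- a bijection of $\cX$ because $r\in\cX_*$ and $\cX$ is a field --- shows $\sum_{x}Z(W_{\{rx,\,rx+rd\}})=q\,Z_{rd}(W)$. Combining these reductions gives
$$
Z(W^+)=\frac{1}{(q-1)^2}\sum_{d\neq 0}Z_d(W)\sum_{r\in\cX_*}Z_{rd}(W).
$$
The critical symmetrization now happens in the $r$-sum: for any fixed $d\in\cX_*$, the map $r\mapsto rd$ is a bijection of $\cX_*$ onto itself. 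This step uses the field axioms essentially --- without a multiplicative inverse for every nonzero element the map need not be surjective --- and is precisely the point at which the hypothesis that $q$ is a prime power is needed. Consequently $\sum_{r\in\cX_*}Z_{rd}(W)=\sum_{d'\neq 0}Z_{d'}(W)$ is independent of $d$, the double sum factors as a product, and recalling $Z(W)=\frac{1}{q-1}\sum_{d\neq 0}Z_d(W)$ yields $Z(W^+)=Z(W)^2$.

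The main obstacle is entirely contained in this last manipulation: unlike the random-permutation theorem, the inner $r$-average is \emph{not} equal to $Z(W)$ for individual $(x,x')$, so the square structure of $Z(W^+)$ emerges only after exchanging the outer $x$-sum with the $r$-sum and twice invoking bijections (by $y=rx$ and by $r\mapsto rd$) that exist only because $\cX$ is a field. Once (t.2) is in hand, Lemma~\ref{lem:conv} delivers polarization in the sense of Proposition~\ref{prop:bipolar}.
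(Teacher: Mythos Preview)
Your proof is correct and is essentially the same as the paper's: both expand $Z(W^+)$, substitute $x'=x+d$ and $u'=u+x$, and then use the two field-bijections $y=rx$ (for fixed $r\in\cX_*$) and $r\mapsto rd$ (for fixed $d\in\cX_*$) to factor the double sum as $Z(W)^2$. The only difference is packaging---you route the computation through the $Z_d$ quantities while the paper keeps the sums explicit---but the two arguments are line-by-line equivalent.
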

\begin{proof}
Again, we only need to show that $Z(W^+)=Z(W)^2$.  To that end observe that
$$
Z(W^+)=\frac1{q(q-1)}\sum_{x,x'\colon x\neq x'}
	\frac1{q-1}\sum_{r\neq 0} Z(W_{\{r\cdot x,r\cdot x'\}})
	\frac1q \sum_u Z(W_{\{u+x,u+x'\}}).
$$
Writing $x'=x+d$, and $u'=u+x$, we can rewrite the above as
$$
Z(W^+)=\frac1{q^2(q-1)^2}
\sum_{d\neq 0}
\sum_x
\sum_{r\neq 0}
	Z(W_{\{r\cdot x,r\cdot x+r\cdot d\}})\sum_{u'}Z(W_{\{u',u'+d\}})
$$
Noting that for any fixed $d$, the sum
$\frac1{q(q-1)}\sum_{x,r\neq0}Z(W_{\{r\cdot x,r\cdot x+r\cdot d\}})$ equals $Z(W)$, and that the sum $\frac1{q(q-1)}\sum_{u',d\neq0}Z(W_{\{u',u'+d\}})$ also
equals $Z(W)$ yields $Z(W^+)=Z(W)^2$.
\end{proof}

When the field is of odd characteristic (i.e., when $q$ is not a power of two),
a further reduction is possible: since $\sum_{u'}Z(W_{\{u',u'+d\}})$ is invariant
under $d\to -d$, one can show that the range of $R$ can be reduced from
$\cX_*$ to only half of the elements in $\cX_*$, by partition $\cX_*$ into
two equal parts in one-to-one correspondence via $r\mapsto-r$, and
picking one of the parts as the range of $R$. It is easy to show that
choosing $R$ uniformly at random over this set of size $(q-1)/2$ will
also yield $Z(W^+)=Z(W)^2$.

\subsection{A method to avoid randomness}\label{subsec:AvoidRandomness}
When the input alphabet size $q$ is not prime, an alternative multi-level code construction technique can be used in order to avoid randomness:
Consider a channel $W$ with input alphabet size $q=\prod_{i=1}^L q_i$, where $q_i$'s are the prime factors of $q$. When the input $X$ to $W$ is uniformly distributed on $\cX$, one can write $X = (U_1,\dotsc,U_L)$, where $U_i$'s are independent and uniformly distributed on their respective ranges $\cU_i = \{0,\dotsc,q_i-1\}$. Defining the channels $W^{(i)}: \cU_i \to \cY\times\cU_1\times\dotsc\times\cU_{i-1}$ through
$$
W^{(i)}(y,u_1^{i-1}\mid u_i) = \prod_{j\neq i}q_j^{-1} \sum_{u_{i+1}^L} W(y|(u_1^L)),
$$
it is easily seen that
$$
I(W) = I(X;Y) =  I(U_1^L;Y) = \sum_i I(U_i;Y,U_1^{i-1}) = \sum_i I(W^{(i)}).
$$

Having decomposed $W$ into $W^{(1)},\dotsc,W^{(L)}$, one can polarize each channel $W^{(i)}$ separately. The order of successive cancellation decoding in this multi-level construction is to first decode all channels derived from $W^{(1)}$, then all channels derived from $W^{(2)}$, and so on. Since the input alphabet size of each channel is prime, no randomization is needed.

\subsection{Equidistant channels}
A channel $W$ is said to be {\sl equidistant} if $Z(W_{\{x,x'\}})$ is constant for all pair of distinct input letters $x$ and $x'$.
These are channels with a high degree of symmetry. In particular, if a channel $W$ is equidistant, then so are the channels $W^+$ and $W^-$ created by the deterministic mapping $(u_1,u_2)\mapsto(u_1+u_2,u_2)$. By similar arguments to those in Section~\ref{subsec:prime}, it follows that this mapping polarizes equidistant channels, regardless of the input alphabet size.

\subsection{How to achieve channel capacity using polar codes}\label{subsec:NonUniform}
In all of the above, the input letters of the channel under consideration were used with equal frequency. This was sufficient to achieve the symmetric channel capacity. However, in order to achieve the true channel capacity, one should be able to use the channel inputs with non-uniform frequencies in general. The following method, discussed in \cite[p.\ 208]{Gallager}, shows how to implement non-uniform input distributions within the polar coding framework.

Given two finite sets $\cX$ and $\cX'$ with $m=|\cX'|$,
any distribution $P_X$ on $\cX$ for which $mP_X(x)$ is an integer for
all $x$ can be induced by the uniform distribution on $\cX'$
and a deterministic map $f:\cX'\to\cX$.

Given a channel $W:\cX\to\cY$, and a distribution $P_X$ as above,
we can construct the channel $W':\cX'\to \cY$ whose input alphabet is
$\cX'$ and $W'(y|x')=W(y|f(x'))$.  Then $I(W')$ is the same as the
mutual information developed between the input and output of the channel
$W$ when the input distribution is $P_X$.  Consequently, a method
that achieves the symmetric capacity of any discrete memoryless channel, such
as the channel polarization method considered in this paper,
can be extended to approach the true capacity of any discrete memoryless
channel by taking $P_X$ as a rational distribution approximating the
capacity achieving distribution. (In order to avoid randomization, one may use prime $m$ in the constructions.)

\subsection{Channels with continuous alphabets}

Although the discussion above has been restricted to channels with {\sl discrete} input and output alphabets, it should be clear that the results hold when the output alphabet is continuous, with minor notational changes.
In the more interesting case of channels with continuous input alphabets --- possibly with input constraints, such as the additive Gaussian noise channel with an input power constraint --- we may readily apply the method of Section~\ref{subsec:NonUniform} to approximate any desired continuous input distribution for the target channel, and thereby approach its capacity using polar codes.

\section*{Acknowledgment}
The work of E. Ar{\i}kan was supported in part by The Scientific and
Technological Research Council of Turkey (T\"UB\.ITAK) under
contract no. 107E216, and in part by the European Commission
FP7 Network of Excellence NEWCOM++ (contract no. 216715).

\appendix \label{Appendix}
\subsection{Proof of Proposition~\ref{rateReliability}}

This proposition was proved in \cite{Arikan2009} for the binary case $q=2$. Here, we will reduce the general case to the binary case.

\subsubsection{Proof of \eqref{Eq:rateReliability}}
The right hand side (r.h.s.) of \eqref{Eq:rateReliability} equals the channel parameter known as {\sl symmetric cutoff rate\/}.
More specifically, it equals the function $E_0(1,Q)$ defined in Gallager \cite[Section~5.6]{Gallager} with $Q$ taken as the uniform input distribution. It is well known (and shown in the same section of \cite{Gallager}) that the cutoff rate cannot be greater than $I(W)$. This completes the proof of \eqref{Eq:rateReliability}.

\subsubsection{Proof of \eqref{Eq:rateReliability4}}

\begin{lemma}\label{rateReliability3}
For any $q$-ary channel $W \colon \cX\to \cY$,
\begin{align}\label{Eq:rateReliability3}
I(W) \le \log(q/2) + \sum_{x_1,x_2\in \cX: x_1\neq x_2} \frac{1}{q(q-1)}I(W_{\{x_1,x_2\}}).
\end{align}
\end{lemma}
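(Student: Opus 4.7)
The plan is to reduce the inequality to a simple pointwise entropy inequality on the posterior of $X$ given $Y$. First I would rewrite both sides in entropy form: since $I(W) = \log q - H(X\mid Y)$ for $X$ uniform on $\cX$, and $I(W_{\{x_1,x_2\}}) = \log 2 - h(x_1,x_2)$ where $h(x_1,x_2)\defn H(X'\mid Y')$ denotes the conditional entropy for the binary subchannel with uniform input, the bound is equivalent to
\begin{equation*}
H(X\mid Y) \ge \bar h \defn \frac{1}{q(q-1)}\sum_{x_1\neq x_2} h(x_1,x_2).
\end{equation*}

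Next I would condition on $Y=y$. Let $q_y$ denote the marginal $P(Y=y)$ and $\pi^y$ denote the posterior $P(X=\cdot\mid Y=y)$, so that $H(X\mid Y) = \sum_y q_y\,H(\pi^y)$. Expressing $h(x_1,x_2)$ as an expectation over the binary subchannel's output, substituting $W(y\mid x) = q\,q_y\,\pi^y_x$, and swapping sums rewrites $\bar h$ as $\tfrac{1}{q-1}\sum_y q_y\,R(\pi^y)$ with
\begin{equation*}
R(\pi)\defn \sum_{\{i,j\}:\, i\neq j}(\pi_i+\pi_j)\,h_b\!\Bigl(\tfrac{\pi_i}{\pi_i+\pi_j}\Bigr),
\end{equation*}
where $h_b$ is the binary entropy. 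It therefore suffices to prove the pointwise bound $(q-1)H(\pi) \ge R(\pi)$ for every probability distribution $\pi$ on $\cX$.

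For the pointwise bound I would manipulate $R(\pi)$ directly: expanding $h_b$ and symmetrizing over unordered pairs gives $R(\pi) = \sum_{i\neq j}\pi_i\log\tfrac{\pi_i+\pi_j}{\pi_i}$ (an ordered sum); splitting the logarithm then yields the identity
\begin{equation*}
R(\pi) = (q-1)H(\pi) + \sum_{i\neq j}\pi_i\log(\pi_i+\pi_j).
\end{equation*}
Because $\pi$ is a probability distribution, $\pi_i+\pi_j\le 1$ for every pair, so $\log(\pi_i+\pi_j)\le 0$ and the residual sum is nonpositive, giving $R(\pi)\le (q-1)H(\pi)$ as required. The only real obstacle is spotting this telescoping in the final step: once $R(\pi)$ is written as an ordered sum, the piece $-(q-1)\sum_i\pi_i\log\pi_i$ that emerges from splitting the log matches $(q-1)H(\pi)$ exactly, leaving a transparently nonpositive residue from $\pi_i+\pi_j\le 1$. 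The other steps --- rewriting $I$ as $\log q - H$ and conditioning on $y$ --- are routine entropy bookkeeping.
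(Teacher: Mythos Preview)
Your proof is correct and takes a genuinely different route from the paper's. The paper proves the lemma via an auxiliary-variable (``genie'') argument: it introduces a random ordered pair $(X_1,X_2)$ of distinct letters that always contains the true input $X$, bounds $I(X;Y)\le I(X;Y,X_1,X_2)$, and expands by the chain rule; the term $I(X;X_1,X_2)$ evaluates to $\log(q/2)$ and $I(X;Y\mid X_1,X_2)$ to the average of $I(W_{\{x_1,x_2\}})$. Your approach instead rewrites everything in entropy form, conditions on the output $Y=y$, and reduces the claim to the pointwise inequality $(q-1)H(\pi)\ge R(\pi)$ for the posterior $\pi=\pi^y$, which you dispatch via the identity $R(\pi)=(q-1)H(\pi)+\sum_{i\neq j}\pi_i\log(\pi_i+\pi_j)$ together with $\pi_i+\pi_j\le 1$. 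The paper's argument is shorter and more conceptual --- the genie interpretation makes the $\log(q/2)$ term transparent as the value of the side information --- while yours is more elementary and self-contained: it needs no auxiliary random variables and produces an explicit nonnegative slack term $-\sum_{i\neq j}\pi_i\log(\pi_i+\pi_j)$, which could be useful if one wished to quantify how loose the bound is for a given channel.
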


\begin{proof}
Let $(X,Y,X_1,X_2)\sim Q(x)P(x_1,x_2|x)W(y|x)$ where $Q$ is the uniform distribution on $\cX$ and
\begin{align*}
P(x_1,x_2|x) = \begin{cases}
\frac{1}{2(q-1)} & \text{if $x_1=x$ and $x_2\neq x_1$}\\
\frac{1}{2(q-1)} & \text{if $x_2=x$ and $x_1\neq x_2$}\\
0 & \text{otherwise}
\end{cases}
\end{align*}
Clearly we have $I(W) = I(X;Y)\le I(X;Y,X_1,X_2)$.
By the chain rule,
$I(X;Y,X_1,X_2) = I(X;X_1,X_2) + I(X;Y|X_1,X_2)$.
Now, simple calculations show that $I(X;X_1,X_2)$ and $I(X;Y|X_1,X_2)$ equal the two terms that appear
on the right side of \eqref{Eq:rateReliability3}.
(Intuitively, $(X,Y)$ are the input and output of $W$ and $(X_1,X_2)$ is a side information of value $\log(q/2)$ supplied by a genie to the receiver.)
\end{proof}

Note that the summation in \eqref{Eq:rateReliability3} can be written as the expectation
$E\left[I(W_{\{X_1,X_2\}})\right]$ where $(X_1,X_2)$ ranges over all distinct pairs of letters from $\cX$ with equal probability.
Next, use the form of \eqref{Eq:rateReliability4} for $q=2$ (which is already established in \cite{Arikan2009}) to write $E\left[I(W_{\{X_1,X_2\}})\right] \le \log(2)\,
E\left[\sqrt{1-Z(W_{\{X_1,X_2\}})^2}\right]$.
Use Jensen's inequality on the function $\sqrt{1-x^2}$, which is concave for $0\le x\le 1$, to obtain
$E\left[\sqrt{1-Z(W_{\{X_1,X_2\}})^2}\right] \le \sqrt{1-E[Z(W_{\{X_1,X_2\}})]^2}$. Since $Z(W) = E[Z(W_{\{X_1,X_2\}})]$,
this completes the proof of \eqref{Eq:rateReliability4}.

\subsubsection{Proof of \eqref{Eq:rateReliability2}}

For notational simplicity we will let $W_x(\cdot) := W(\cdot\mid x)$. First note that 
$$
I(W) = \frac 1q \sum_{x\in\cX} D\left( W_x \bigg\| \frac1q \sum_{x'} W_{x'} \right)
$$
where $D(\cdot\| \cdot)$ is the Kullback-Leibler divergence.
Each term in the above summation can be bounded as
\begin{align}
\notag
D\left( W_x \bigg\| \frac1q \sum_{x'} W_{x'} \right) &
	= \sum_y W_x(y) \log \frac{W_x(y)}{\frac1q \sum_{x'}W_{x'}(y)} \\
\notag
& \leq \log e\sum_y W_x(y) \left( \frac{W_x(y)-\frac1q \sum_{x'}W_{x'}(y)}{\frac1q \sum_{x'}W_{x'}(y)} \right) \\
\notag
& \leq q\log e\sum_y \left|W_x(y) - \frac1q \sum_{x'}W_{x'}(y) \right| \\
\label{eq:inv-pinsker}
& = q\log e\left\| W_x - \frac1q \sum_{x'} W_{x'}\right\|_1.
\end{align}
In the above, the first inequality follows from the relation $\ln(x)\leq x-1$, and the second inequality is due to $W_x(y) \leq \sum_{x'} W_{x'}(y)$. The $\mathcal{L}_1$ distance on the right hand side of \eqref{eq:inv-pinsker} can be bounded, using the triangle inequality, as
$$
\left\| W_x -\frac1q \sum_{x'}W_{x'}\right\|_1 \leq 
	\frac1q \sum_{x'\in\cX} \left\| W_x - W_{x'} \right\|_1.
$$
Also, it was shown in \cite[Lemma 3]{Arikan2009} that
$$
\left\| W_x - W_{x'} \right\|_1 \leq 2\sqrt{1-Z(W_{\{x,x'\}})^2}.
$$
Combining the inequalities above, we obtain
\begin{align*}
I(W) 
& \leq \frac{2\log e}{q} \sum_{x,x'\in\cX\colon x\neq x'} \sqrt{1-Z(W_{\{x,x'\}})^2} \\
& \leq 2(q-1) \log e \sqrt{1-Z(W)^2},
\end{align*}
where the last step follows from the concavity of the function $x\mapsto\sqrt{1-x^2}$ for $0\leq x\leq1$.

\subsection{Proof of Proposition~\ref{prop:z-}}
Define the channel $W^{(\pi u)}$ through
$$
W^{(\pi u)}(y_1y_2\mid x) = W(y_1\mid x+u)W(y_2\mid \pi(u)).
$$
and let
$$
W^{(\pi)} = \frac1q \sum_{u\in\cX} W^{(\pi u)}.
$$
Note if one fixes the permutation in the transformation $W\mapsto (W^-,W^+)$ to $\pi$, then $W^-=W^{(\pi)}$.

We will show the stronger result that
\begin{align*}
Z(W)\leq Z(W^{(\pi)}) & \leq \min \{qZ(W), 2Z(W)+(q-1)Z(W)^2\}
\end{align*}
for all $\pi$, which will imply Proposition~\ref{prop:z-} since $Z(W^-) = \frac{1}{q!} \sum_\pi Z(W^{(\pi)})$. To prove the upper bound on $Z(W^{(\pi)})$, we write
\begin{align}
Z(W^{(\pi)}) & = \frac{1}{q(q-1)} \sum_{\substack{x,x' \in \cX \\ x\neq x'}} \sum_{y_1,y_2\in \cY} \frac{1}{q} \sqrt{\sum_{u\in \cX} W(y_2\mid \pi(u))W(y_1\mid x+u)} \sqrt{\sum_{v\in \cX} W(y_2\mid \pi(v))W(y_1\mid x'+v)} \notag \\
& \leq \frac{1}{q(q-1)} \sum_{\substack{x,x' \\ x\neq x'}} \sum_{y_1,y_2} \frac{1}{q} \sum_{u} \sqrt{W(y_2\mid \pi(u))W(y_1\mid x+u)} \sum_{v} \sqrt{W(y_2\mid \pi(v))W(y_1\mid x'+v)} \notag \\
& =  \frac{1}{q} \sum_{u} \frac{1}{q(q-1)} \sum_{\substack{x,x' \\ x\neq x'}} \sum_{y_2} W(y_2\mid \pi(u)) \sum_{y_1} \sqrt{W(y_1\mid x+u)W(y_1\mid x'+u)} \label{eqn:Z2-1}\\
& \qquad + \frac{1}{q^2(q-1)} \sum_{\substack{u,v\\ u\neq v}} \sum_{y_2} \sqrt{W(y_2\mid \pi(u))W(y_2\mid \pi(v))} \sum_{\substack{x,x' \\ x\neq x'}} \sum_{y_1} \sqrt{W(y_1\mid x+u)W(y_1\mid x'+v)}. \label{eqn:Z2-2}
\end{align}
Note that
$$
\sum_{y_2} W(y_2\mid \pi(u)) \sum_{y_1} \sqrt{W(y_1\mid x+u)W(y_1\mid x'+u)} = Z(W_{\{x+u,x'+u\}})
$$
for any $u\in \cX$. Therefore the r.h.s.\ of (\ref{eqn:Z2-1}) is equal to $Z(W)$. Also, note that the innermost sum over $y_1$ in \eqref{eqn:Z2-2} is upper bounded by $1$. Therefore, \eqref{eqn:Z2-2} is upper bounded by $(q-1)Z(W)$. Alternatively, noting that for any fixed $u\neq v$
\begin{align*}
\sum_{\substack{x,x'\\ x\neq x'}} \sum_{y_1} \sqrt{W(y_1\mid x+u)W(y_1\mid x'+v)} & = q + \left[\sum_{\substack{x,x'\colon x\neq x' \\ x+u \neq x'+v}} \sum_{y_1} \sqrt{W(y_1\mid x+u)W(y_1\mid x'+v)} \right] \\
& \leq q + q(q-1) Z(W),
\end{align*}
we have 
\begin{align*}
\text{r.h.s.\ of (\ref{eqn:Z2-2}) } & \leq (1+(q-1)Z(W)) \frac{1}{q(q-1)} \sum_{\substack{u,v\\ u\neq v}} \sum_{y_2} \sqrt{W(y_2\mid u)W(y_2\mid v)} \\
& = Z(W)+(q-1)Z(W)^2.
\end{align*}
This in turn implies $Z(W^{(\pi)}) \leq \min \left\{qZ(W), 2Z(W) + (q-1) Z(W)^2 \right\}$. 

The proof of $Z(W)\leq Z(W^{(\pi)})$ follows from the concavity of $Z(W_{\{x,x'\}})$ in $W$, shown in \cite{Arikan2009}:
\begin{align*}
Z(W^{(\pi)}) & = \frac{1}{q(q-1)} \sum_{x\neq x'} Z(W_{\{x,x'\}}^{(\pi)}) \\
& \geq \frac{1}{q(q-1)} \sum_{x\neq x'} \frac1q \sum_u Z(W^{(\pi u)}_{\{x,x'\}}) \\
& = \frac1q \sum_u \frac{1}{q(q-1)} \sum_{x\neq x'}  \sum_{y_1,y_2} \sqrt{W(y_1\mid x+u))W(y_1\mid x'+u)W(y_2\mid \pi(u))W(y_2\mid \pi(u))} \\
& = \frac1q \sum_u \frac{1}{q(q-1)} \sum_{x\neq x'} Z(W_{\{x+u,x'+u\}}) \\
& = Z(W).
\end{align*}

\end{document}